\theoremstyle{plain}
\newtheorem{theorem}{Theorem}[section]
\newtheorem{lemma}[theorem]{Lemma}
\newtheorem{algori}{Procedure}[section]
\theoremstyle{remark}
\newtheorem{remark}{Remark}
\theoremstyle{definition}
\newtheorem{fac}[theorem]{Fact}
\begin{document}


\title{Mapping prior information onto LMI eigenvalue-regions for discrete-time subspace identification}

\author{Rodrigo A. Ricco, Bruno O. S. Teixeira}

\maketitle

\begin{abstract}
In subspace identification, prior information can be used to constrain the eigenvalues of the estimated state-space model by defining corresponding LMI regions. 
In this paper,  first we argue on what kind of practical information can be extracted from historical data or step-response experiments to possibly improve the dynamical properties of the corresponding model and, also, on how to mitigate the effect of the uncertainty on such information. For instance, prior knowledge regarding the overshoot, the period between damped oscillations and settling time may be useful to constraint the possible locations of the eigenvalues of the discrete-time model.
Then, we show how to map the prior information onto LMI regions and, when the obtaining regions are non-convex, to obtain convex approximations.
\end{abstract}

\section{Introduction}

Prior information can be used in system identification to possibly improve  some properties of the mathematical model \cite{Ljung2010,Barbosa2011,Noel2017}. 
Such gray-box approach is especially of interest when the dynamical data are limited in terms of persistence of excitation, signal-to-noise ratio, number of data samples, and, for nonlinear systems, coverage of operating points.
These situations may occur due to experiment restrictions on the plant or when only historical input-output data are available. 
By contrast, if {prior information} could be mathematically ``translated'' and properly incorporated in the estimation procedure, we may observe an improved performance compared to the black-box model. In fact, \cite{Teixeira2011} argue that if the uncertainty on the prior information is properly addressed, then it may improve the model accuracy.

In subspace identification, one of the main challenges is to insert prior information into the estimation procedure \cite{mercere2014,Markovsky2017}. The fact that the state-space matrices are estimated up to an unknown similarity transformation imposes an additional challenge \cite{Katayama2005c,verhaegen2007}. 
Recent works have addressed this topic. 
\cite{Lacy2003} shows how to guarantee the stability of the state-space model. The recursive case is addressed in \cite{Shang2015}.
\cite{privara2012} shows how to use the stationary gain as prior information in batch subspace identification methods. 
Likewise, \cite{Alenany2011} presents an alternative method to use prior information on the time constant and stationary gain into first-order models. In \cite{Alenany2013}, the time-varying case is addressed, where auxiliary information on both the stationary gain and some null entries in the transfer matrix associated to the multivariable system are taken into account.  
 \cite{mercere2016} derives the mathematical relations between  
 the stationary gain, damping ratio, time constant and natural frequency (\textit{prior} information) and the Markov parameters of the corresponding model. Then, such prior information is transformed into equality or inequality constraints. However, an algorithm to solve the discrete-time subspace identification subject to such constraints is not stated. \cite{Miller2013} presents a framework for the batch case, in which eigenvalue constraints are enforced by means of LMIs in the discrete-time subspace identification framework. Finally, in \cite{Demourant2017}, the constrained LMI-based frequency-domain subspace algorithm is applied to a wind tunnel test.   

In  control theory, the design of LMI regions is commonly based on the performance criteria previously defined by the user \cite{botto2002,Rosinova2014}. Conversely, for constrained identification purposes, we need to previously know some system properties  either from the physical laws that describe the system or from experimental data in order to define LMI regions. From the step-response tests, the auxiliary information regarding overshoot, the period between damped oscillations and the settling time seems to be reasonable way to define the LMI regions in practice. However, the auxiliary information obtained from experimental data may be uncertain due to many reasons, e.g., the measurement noise on data and the complexity of the system dynamics. In fact, one of the most challenging assumptions in the methods of \cite{Miller2013} and \cite{Demourant2017} is to consider that the prior information is already known on the $z$-plane. Another drawback is related to the convexity of the mapped complex regions obtained by the aforementioned dynamical features. 
Thus, it is of interest to approximate these mapped regions by means of convex regions \cite{botto2002,Rosinova2014}.  

In this scenario, the following question arises: how can we properly and approximately map the prior information from step response tests or historical data by means of eigenvalue constraints written as LMIs?  
We aim at circumventing the gap between mapping and using the prior information obtained in practice for discrete-time subspace identification with eigenvalue constraints. To achieve that, 
we assume that the dominant dynamics can be approximated by second order.

The connections between the practice and theory addressed in this paper allow for translating information regarding the overshoot, the period between damped oscillations and the settling time directly into LMI regions for discrete-time systems. In this issue, although the estimated values of the auxiliary information from step-response tests or even historical data are straightforward to obtain, we argue that tuning more conservative regions may overcome the problem on the approximation of the prior information.

Thus, the contribution of this work is twofold: (\rm{i}) a methodology to build LMI regions that constrain the model eigenvalues, according to the dominant system domains, from experimental noisy data is presented in Section \ref{sec:convex_approx}, and; (\rm{ii}) specifically, a novel more conservative approximation of the cardioid related to the overshoot in the $z$-plane is presented in Fact \ref{prop:ellipse_ricco}, in which the non-convex cardioid is mapped as an outer ellipse LMI region. 

This paper is organized as follows. Section \ref{sec:constraining_poles} states the problem under investigation, while Section \ref{background} presents important definitions. Section \ref{sec:convex_approx} presents a framework to build LMI regions to constrain the model eigenvalues. The numerical examples of Section \ref{results} illustrate the effectiveness of the proposed approaches. Finally, the concluding remarks are discussed in Section \ref{Conclusions}. 

\section{Problem statement}\label{sec:constraining_poles}

Consider the linear time-invariant discrete-time system
\begin{eqnarray}
x_{k+1}&=Ax_{k}+Bu_{k}+w_{k}, \nonumber \\ 
y_{k}&=Cx_{k}+Du_{k}+\nu_{k}, \label{state_eq}
\end{eqnarray}
where $A$ $\in$ $\mathbb{R}^{n\times n}$, $B$ $\in$ $\mathbb{R}^{n\times n_{u}}$, $C$ $\in$ $\mathbb{R}^{n_{y}\times n}$ and $D$ $\in$ $\mathbb{R}^{n_{y}\times n_{u}}$. 
The vectors $x_{k} \in \mathbb{R}^n$, $u_{k} \in \mathbb{R}^{n_{u}}$ and $y_{k} \in \mathbb{R}^{n_{y}}$ represent, respectively, the states, inputs and outputs. $w_{k} \in \mathbb{R}^n$ and $\nu_{k} \in \mathbb{R}^{n_{y}}$ are the process and measurement noise terms, respectively, both assumed to be zero-mean white Gaussian noise sequences.

Assume that an initial state-space model given by \linebreak $\{A^{*},B^{*},C^{*},D^{*}\}$ is obtained by means of a standard subspace identification method \cite{Katayama2005c,verhaegen2007}. Also, assume that  prior information about the eigenvalues of $A$ is available. 
Given these assumptions, \cite{Miller2013} show how to enforce constraints (prior information) on the localization of the eigenvalues of the matrix $A$ of the model \eqref{state_eq} by modifying the initial estimate $A^{*}$ as follows. 

Consider the  cost function
\begin{eqnarray}\label{cost_P_Q}
J_{X}(Q,P)\triangleq\left\|A^{*}P~~-~~Q\right\|^{2}_{F},
\end{eqnarray}
where $P \in \mathbb{R}^{n \times n}$ is assumed to be a symmetric matrix, and  define $Q$ $\triangleq$ $\hat{A}P$ in order to obtain a linear optimization problem, where $\hat{A}$ is the new estimate to be obtained. 
The problem of {\em subspace identification with eigenvalue constraints} is given by
\begin{eqnarray} 
\text{minimize}&~~J_{X}(Q,~P),  \label{eq:eigenvalue_c_problem} \\
\text{subject~to}&~~M_{\mathcal{D}}(Q,~P)\geq 0, \label{eq:D_constraint} \\ 
\label{eq:P_matrix} &P=P^{T}>0, 
\end{eqnarray}
where $J_{X}(Q,P)$ is given by \eqref{cost_P_Q} and
$M_{\mathcal{D}}(Q,P)~\triangleq~\lambda \otimes P~+~\beta \otimes Q+\beta^{T}  \otimes Q^{T}$
is the eigenvalue constraint written as a LMI corresponding to the 
 the convex set $\mathcal{D}$ of the $z$-plane defined by
\begin{eqnarray}\label{Destabilidade}
\mathcal{D}\triangleq\left\{z~\in~\mathbb{C}: f_{\mathcal{D}}(z) \geq 0 \right\},
\end{eqnarray}
where
$f_{\mathcal{D}}(z)\triangleq\lambda~+~\beta z~+~\beta^{T}\bar{z}$
is the characteristic function of $\mathcal{D}$, where $\lambda$ is a symmetric matrix and $\beta$ is a square matrix. 
If the problem given by \eqref{eq:eigenvalue_c_problem}-\eqref{eq:P_matrix}  is feasible, then we obtain the matrices $Q$ and $P$, from which we estimate
$\hat{A}=QP^{-1}$
with eigenvalues belonging to the convex set \eqref{Destabilidade}. 
From \eqref{eq:eigenvalue_c_problem}-\eqref{eq:P_matrix}, note that it is not possible to enforce constraints on each eigenvalue of $\hat{A}$ separately. 

The goal of this paper is to obtain LMI constraints \eqref{eq:D_constraint} using the D-stability theory \cite{Chilali1996} 
to solve the aforementioned gray-box subspace identification problem.
To achieve that, we assume that uncertain prior information regarding (i) the overshoot $O_{s}$, (ii) the period between damped oscillations $T_d$, or (iii) the settling time $t_{s,1\%}$ may be available from step-response tests or historical data, for instance. Then, assuming that the dominant dynamics can be approximated by second order, we show how to map such prior information onto convex LMI eigenvalue constraints \eqref{eq:D_constraint} on the $z$-plane. Also, we discuss on how such prior information is useful to improve the state-space models. 
\section{Preliminaries}\label{background}

The dominant dynamic  behavior of many practical processes can be approximated by second order time-invariant models
\begin{eqnarray} 
\label{eq:segunda_ordem}
G(s)=\frac{Kw_{n}^{2}}{s^{2}+2\zeta w_{n}s + w_{n}^2},
\end{eqnarray}
where 
$\zeta$ is the damping ratio, $w_n$ is the natural frequency, and $K$ is the static gain. The poles of \eqref{eq:segunda_ordem} are given by the real  \hbox{$\sigma \triangleq \zeta w_{n}$} and   imaginary  \hbox{$w_{d} \triangleq w_{n}\sqrt{1-\zeta^2}$} parts. 
The transient response of \textit{underdamped} second-order systems is characterized by the dynamical measures:  the \textit{time-constant} $\tau=1/\zeta w_{n}$, the \textit{overshoot} \hbox{$O_{s} \approx 100(1-\zeta/0.6)$}, the \textit{settling time} $t_{s,1\%} \approx 4.6/\zeta w_{n}$, the \textit{rise-time} $t_{r} \approx 1.8/w_{n}$, the \textit{peak-time} $t_{p}=\pi/w_{d}$ and the \textit{period between damped oscillations} $T_d \triangleq 2\pi/w_{d}$. For details about the relations aforementioned see \cite{Franklin2010}.
Recall that although the overshoot $O_{s}$ is defined in the sense of control systems, here it is used as a measure of the maximum oscillation of the \textit{underdamped} system.
%


In order to represent the dynamical regions for \textit{underdamped} systems on the $s$-plane, assume that $O_{s}\leq O^{\textrm{max}}_{s}$,   
$T_d \geq T_d^{\textrm{max}}$ and $t_{s,1\%} \leq t^{\textrm{max}}_{s,1\%}$. 
These assumptions are motivated by the following practical reasons: (i) we cannot estimate  these dynamical measures exactly; (ii) it is reasonable to be more conservative on the definition of the corresponding LMI regions; and (iii) in doing so, we  consider the effect of additive noise.
Then, rewriting these relations, we obtain 
\begin{eqnarray} \label{eq:Mp_max}
O_{s} \leq O^{\textrm{max}}_{s}   &\Longrightarrow & \zeta  \geq 0.6\left(1-\frac{O^{\textrm{max}}_{s}}{100}\right),\\
\label{eq:T_max}
T_d  \geq T_d^{\textrm{max}} & \Longrightarrow & w_{d}  \leq \frac{2 \pi}{T_d^{\textrm{max}}},\\ \label{eq:ts_max}
t_{s,1\%} \leq t^{\textrm{max}}_{s,1\%} & \Longrightarrow  & \zeta w_{n}   \geq \frac{4.6}{t^{\textrm{max}}_{s,1\%}},
\end{eqnarray}
where $\zeta \geq \zeta^{\rm{min}} $, $w_{d} \leq w_{d}^{\textrm{max}}$, and $\zeta w_{n} \geq {\zeta w_{n}}^{\textrm{min}}$, such that \hbox{$0<\zeta^{\rm{min}} <1$}, $w^{\rm{max}}_{d}>0$ and ${\zeta w_{n}} ^{\rm{min}}>0$.
Fig. \ref{fig:mapping_regions1}a shows that the region described by \eqref{eq:Mp_max} is  bounded on the $s$-plane inside the cone defined by two lines with angle \hbox{$\pm\beta=\textrm{cos}^{-1}(\zeta)$}. 
The region (\ref{eq:T_max}) is given by  lines positioned parallel to the real axis in $\pm w_{d}$ as shown in Fig \ref{fig:mapping_regions2}a. Finally, Fig. \ref{fig:mapping_regions3}a illustrates the region  \eqref{eq:ts_max} as the semiplane on the left of the  line $\sigma=-\zeta w_{n}$. Observe that the meaning of  \eqref{eq:Mp_max}-\eqref{eq:ts_max} can be analyzed by means of the figures \ref{fig:mapping_regions1}a, \ref{fig:mapping_regions2}a and \ref{fig:mapping_regions3}a on the $s$-plane and also by the figures \ref{fig:mapping_regions1}b, \ref{fig:mapping_regions2}b and \ref{fig:mapping_regions3}b on the $z$-plane. For example, although $\zeta^{2}>\zeta^{1}$ ($O_{s,2} < O_{s,1}$), note that the region of $\zeta^{1}$ is larger than the region of $\zeta^{2}$. 


We know that the poles of the continuous-time model $s_{1,2}=-\zeta w_{n} \pm \jmath w_{n}\sqrt{1-\zeta^2}$ are mapped onto $z_{1,2}=e^{s_{1,2}T_{s}}=z_{1,2}=re^{\jmath \theta}$ with 
$r \triangleq e^{-\zeta w_{n}T_{s}}$ and
$\theta \triangleq  \pm w_{n}T_{s}\sqrt{1-\zeta^2}$,
where $T_{s}$ is the sampling period. 
The regions on the left of $\zeta$ on the $s$-plane (Fig. \ref{fig:mapping_regions1}a) are mapped within the cardioids on the $z$-plane (Fig. \ref{fig:mapping_regions1}b).
The parallel lines in Fig. \ref{fig:mapping_regions2}a 
are mapped in Fig. \ref{fig:mapping_regions2}b on the right of $w_{d}$ on the $z$-plane. Note that the bottom-half plane mapping from the $s$-plane into the $z$-plane could be analyzed by symmetry.   
Finally, figures \ref{fig:mapping_regions3}a and b shows two illustrative cases of regions mapped from the $s$-plane onto the $z$-plane regarding $\zeta w_{w}$.

\section{LMI eigenvalue dynamical regions for subspace identification}\label{sec:convex_approx}
The regions presented in figures \ref{fig:mapping_regions1}b, \ref{fig:mapping_regions2}b and \ref{fig:mapping_regions3}b may be approximated or exactly represented by convex LMI functions on the $z$-plane. 
To combine different eigenvalue regions \eqref{Destabilidade}, the next result obtained in \cite{Chilali1996} is of interest.
 \begin{lemma}[\cite{Chilali1996}] \label{lem:intersection} {Given $N$ LMI regions $\{\mathcal{D}_{1}$, $\cdots$, $\mathcal{D}_{N}\}$, the intersection of these regions {$\mathcal{D}$ $=$ $\mathcal{D}_{1}\cap \cdots \cap \mathcal{D}_{N}$} has the following characteristic function
\begin{eqnarray}\label{LMIs_juntas}
f_{\mathcal{D}}(z)=\left[\begin{array}{ccc}
                f_{\mathcal{D}_{1}}(z) &\cdots &0 \\
                \vdots  &\ddots &\vdots\\
                 0   &\cdots  &f_{\mathcal{D}_{N}}(z)
              \end{array}
\right].
\end{eqnarray}
}\end{lemma}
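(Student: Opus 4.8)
The plan is to exhibit the block-diagonal matrix in \eqref{LMIs_juntas} as a bona fide characteristic function of the required form $\lambda + \beta z + \beta^{T}\bar z$, and then to reduce the single matrix inequality $f_{\mathcal{D}}(z)\geq 0$ to the collection of inequalities $f_{\mathcal{D}_{i}}(z)\geq 0$ by invoking the elementary fact that a block-diagonal Hermitian matrix is positive semidefinite exactly when each of its diagonal blocks is. This decouples membership in $\mathcal{D}$ into simultaneous membership in all the $\mathcal{D}_{i}$, which is what ``intersection'' means.

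First I would write each given region in its defining form $f_{\mathcal{D}_{i}}(z)=\lambda_{i}+\beta_{i}z+\beta_{i}^{T}\bar z$, with $\lambda_{i}=\lambda_{i}^{T}$ and $\beta_{i}$ square, and set
\begin{eqnarray*}
\lambda \triangleq \textrm{diag}(\lambda_{1},\ldots,\lambda_{N}), \qquad \beta \triangleq \textrm{diag}(\beta_{1},\ldots,\beta_{N}).
\end{eqnarray*}
Since a block-diagonal arrangement of symmetric blocks is symmetric and of square blocks is square, the pair $(\lambda,\beta)$ meets the structural requirements of a characteristic function. Because the scalars $z,\bar z$ commute with the blocks, $\beta z$ and $\beta^{T}\bar z$ are again block-diagonal with blocks $\beta_{i}z$ and $\beta_{i}^{T}\bar z$, so that
\begin{eqnarray*}
\lambda + \beta z + \beta^{T}\bar z = \textrm{diag}\!\left(\lambda_{1}+\beta_{1}z+\beta_{1}^{T}\bar z,\ \ldots,\ \lambda_{N}+\beta_{N}z+\beta_{N}^{T}\bar z\right),
\end{eqnarray*}
which is precisely the right-hand side of \eqref{LMIs_juntas}. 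This confirms that \eqref{LMIs_juntas} defines an LMI region in the sense of \eqref{Destabilidade}.

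It remains to identify that region with the intersection. The key step is the equivalence $\textrm{diag}(M_{1},\ldots,M_{N})\geq 0 \iff M_{i}\geq 0$ for every $i$, valid for Hermitian blocks $M_{i}$. For the forward direction I would test the block-diagonal matrix against vectors supported on a single block, forcing $v^{*}M_{i}v\geq 0$; for the converse I would partition an arbitrary $v=(v_{1},\ldots,v_{N})$ conformally and use $v^{*}\,\textrm{diag}(M_{1},\ldots,M_{N})\,v=\sum_{i}v_{i}^{*}M_{i}v_{i}\geq 0$. Applying this with $M_{i}=f_{\mathcal{D}_{i}}(z)$ then yields $f_{\mathcal{D}}(z)\geq 0 \iff f_{\mathcal{D}_{i}}(z)\geq 0$ for all $i \iff z\in\mathcal{D}_{1}\cap\cdots\cap\mathcal{D}_{N}$, which is the claim.

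I do not anticipate a serious obstacle, since the argument is essentially the observation that semidefiniteness splits across a block-diagonal direct sum. The only point deserving care is verifying that each $f_{\mathcal{D}_{i}}(z)$, and hence their direct sum, is Hermitian for every complex $z$ so that the quadratic-form characterization of semidefiniteness applies over $\mathbb{C}$; this follows from $\lambda_{i}$ being real symmetric together with the commutativity of $z,\bar z$ with the real matrices $\beta_{i},\beta_{i}^{T}$.
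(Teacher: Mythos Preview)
The paper does not supply its own proof of Lemma~\ref{lem:intersection}; it is stated as a result from \cite{Chilali1996} and used without further argument. Your proposal is therefore not competing with anything in the paper, and the argument you give is correct: exhibiting $\lambda=\textrm{diag}(\lambda_{1},\ldots,\lambda_{N})$ and $\beta=\textrm{diag}(\beta_{1},\ldots,\beta_{N})$ shows that \eqref{LMIs_juntas} has the required affine-in-$(z,\bar z)$ structure, and the block-diagonal splitting of semidefiniteness then identifies the resulting region with the intersection. Your closing remark about Hermiticity is also well placed; it is exactly what justifies applying the quadratic-form test blockwise.
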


In order to parametrize the regions shown in figures \ref{fig:mapping_regions1}b, \ref{fig:mapping_regions2}b, and \ref{fig:mapping_regions3}b, the following sections discuss the theoretical aspects of how these regions are mapped into LMIs. In sections \ref{sec:Overshoot_revisited},  \ref{sec:period_oscillations} and \ref{sec:settling_time} we focus on the discussion of limitations and benefits of  using such LMIs for subspace identification. Also, practical aspects in the use of auxiliary information is discussed in Section \ref{Sec:Bound_variables} and a new LMI region regarding $O_{s}$ (see Fig. \ref{fig:mapping_regions1}b) is proposed in Fact \ref{prop:ellipse_ricco}.

  \begin{figure}[H]
\begin{center}
\includegraphics[scale=0.5]{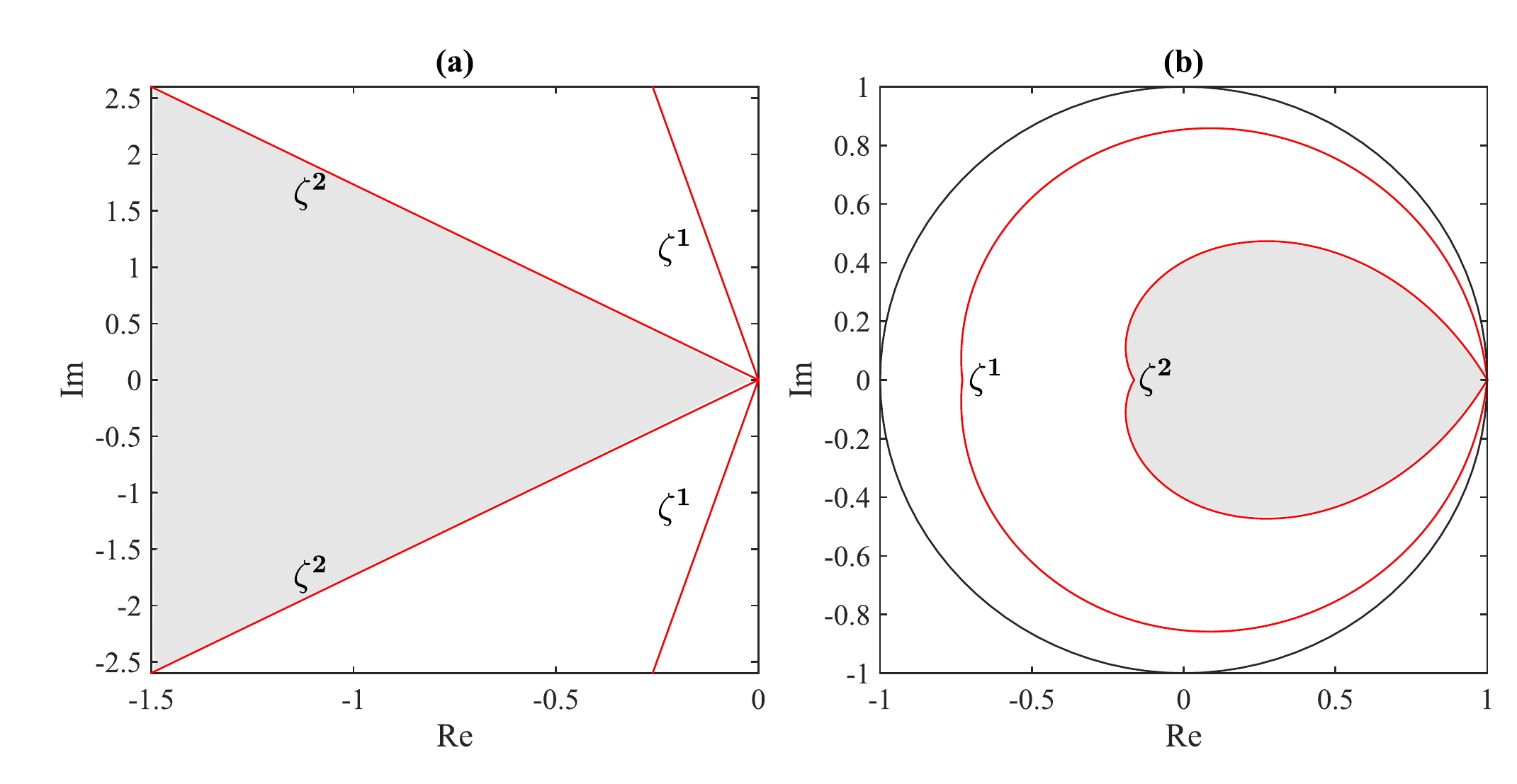} 
\end{center}
\caption{Some dynamical regions mapped from the $s$-plane onto the $z$-plane. (a) the region inside the cones bounded by lines according to $\zeta$ are mapped within the cardioids presented in (b). The shaded regions on the left (continuous) are mapped on shaded regions on the right (discrete-time).}
\label{fig:mapping_regions1}                                 
\end{figure}  
  \begin{figure}[H]
\begin{center}
\includegraphics[scale=0.5]{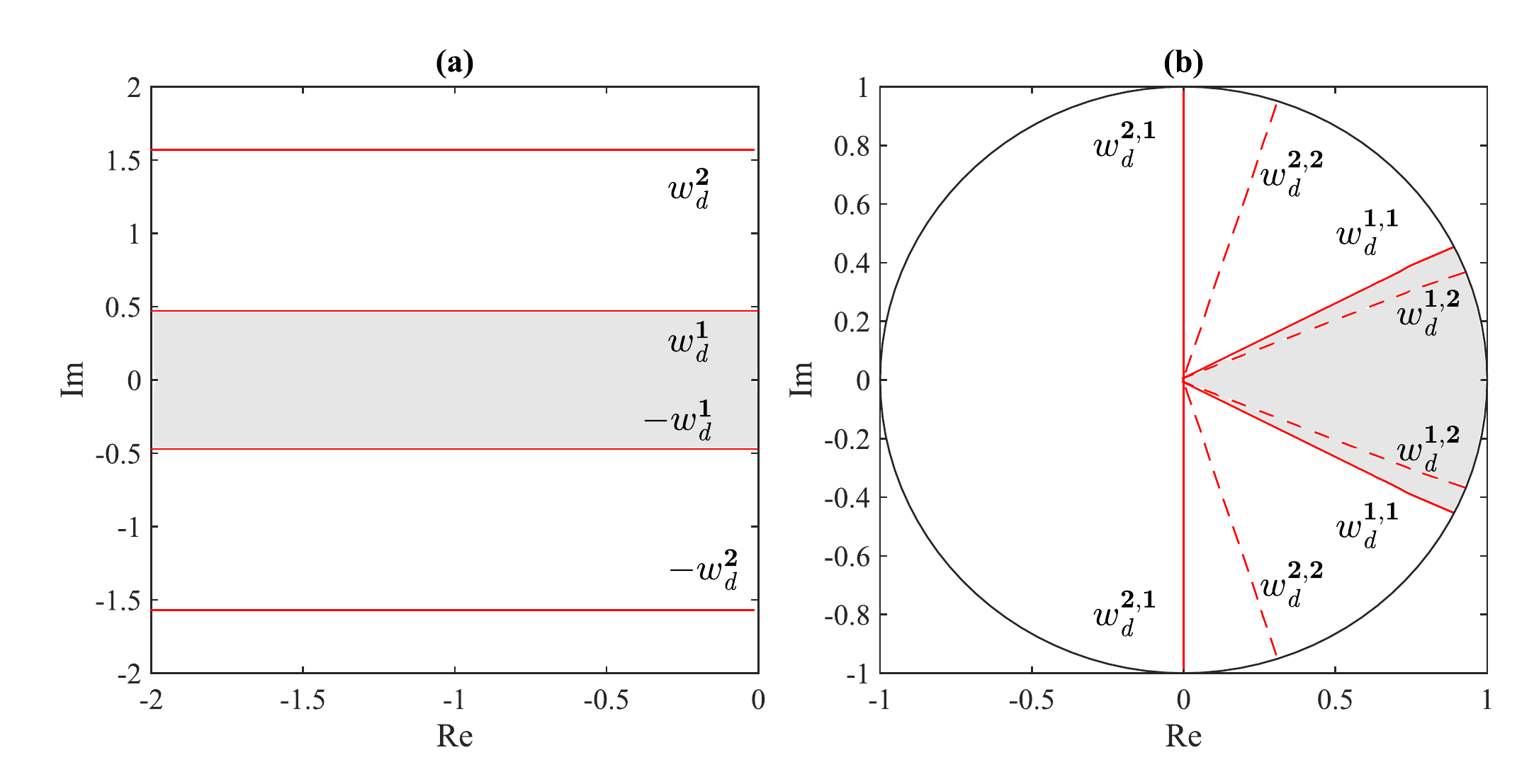} 
\end{center}
\caption{Illustrative dynamical regions mapped from the $s$-plane onto the $z$-plane. The region between the horizontal lines $\pm w_{d}$ is given by (a) mapped inside the region limited by the cones generated from $w_{d}$ as  in (b) for two possible values of the sampling period, where $T^{1}_{s} < T^{2}_{s}$ in red and dashed red, respectively. The shaded regions on the left (continuous) are mapped on shaded regions on the right (discrete-time). }
\label{fig:mapping_regions2}                                 
\end{figure}
  \begin{figure}[H]
\begin{center}
\includegraphics[scale=0.5]{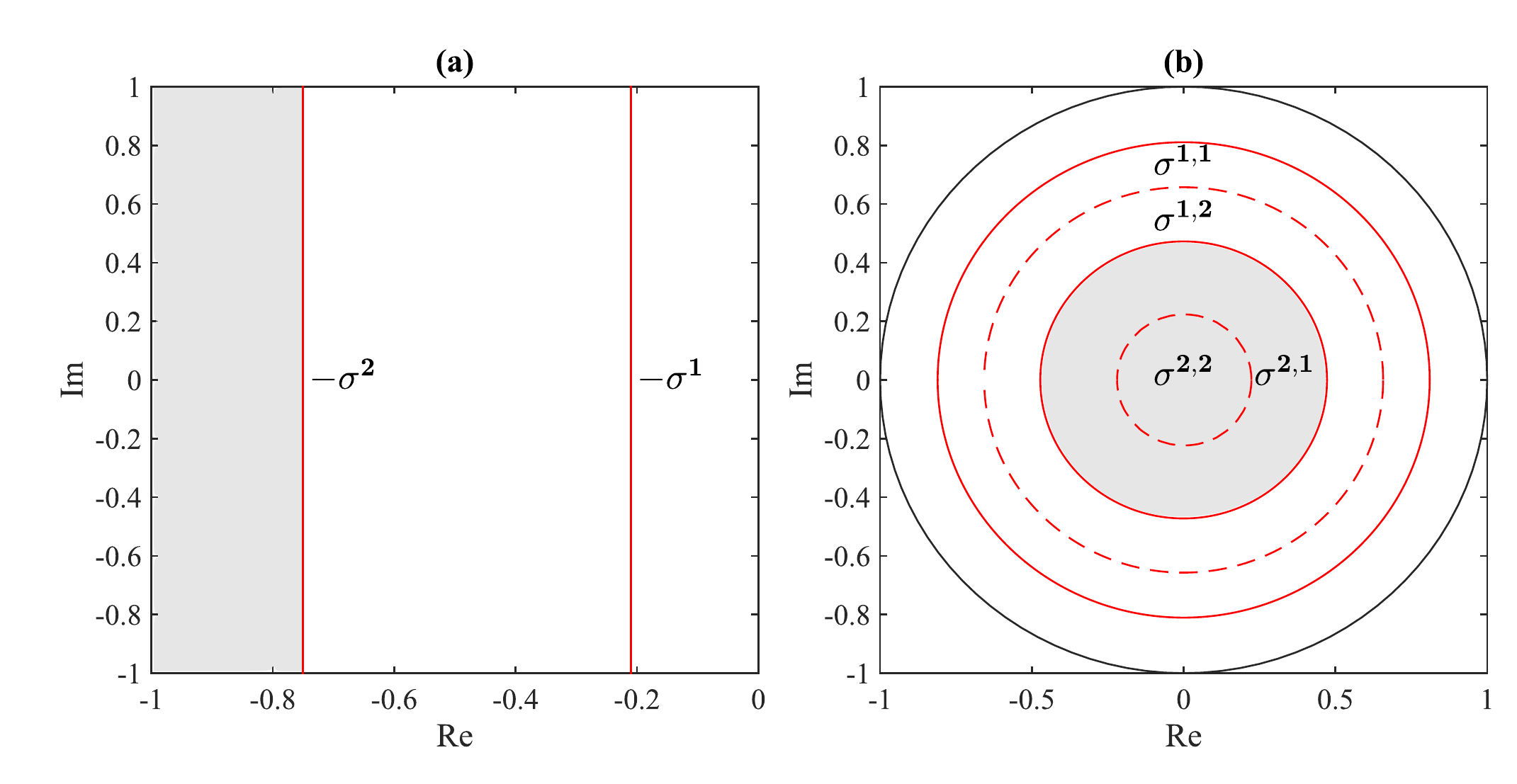} 
\end{center}
\caption{Examples of dynamical regions mapped from the $s$-plane onto the $z$-plane. The region left to the vertical lines bounded by $\zeta w_{n}$ given by (a) is mapped within the circles shown in (b) for two illustrative values of the sampling period, where $T^{1}_{s} <T^{2}_{s}$ in red and dashed red, respectively. The shaded regions on the left (continuous) are mapped on shaded regions on the right (discrete-time).}
\label{fig:mapping_regions3}                                 
\end{figure}

\subsection{Overshoot} \label{sec:Overshoot_revisited}
Next, we review the results of \cite{Rosinova2014} in which the region corresponding to $O_{s}$ (see Fig. \ref{fig:mapping_regions1}b) is approximated either by inner circles or ellipses on the $z$-plane.
  
 \begin{fac}[\cite{Rosinova2014}]\label{fact:circle}The set  approximating the cardioid related to $\zeta^{\rm{min}} $ by means of a circle is given by
\begin{eqnarray}\label{def_abs_z}
\mathcal{D}_{O}=\left\{z~\in~\mathbb{C}: |z|\leq r^{\rm{max}} ,~0~<~r^{\rm{max}} <~1 \right\}
\end{eqnarray}
and is equivalent to the LMI  $f_{\mathcal{D}_{O}}(z) \geq 0$, where
\begin{eqnarray}\label{eq:mp_1}
f_{\mathcal{D}_{O}}(z)=\left[\begin{array}{cc}
                r^{\rm{max}} & -c^{\rm{max}} \\
                -c^{\rm{max}} & r^{\rm{max}}
              \end{array}
\right]+\left[\begin{array}{cc}
                0 & 1 \\
                0 & 0
              \end{array}
\right]z+\left[\begin{array}{cc}
                0 & 0 \\
                1 & 0
              \end{array}
\right]\bar{z},
\end{eqnarray}
 
\noindent such that 
  
\begin{eqnarray}\label{eq:cmax}
c^{\rm{max}}&\triangleq& e^{-\beta^{\rm{max}}/\rm{tan}(\beta^{\rm{max}})}\rm{cos}(-\beta^{\rm{max}}),\\ \label{eq:rmax}
r^{\rm{max}}&\triangleq& e^{-\beta^{\rm{max}}/\rm{tan}(\beta^{\rm{max}})}\rm{sin}(\beta^{\rm{max}}),
\end{eqnarray}
where $\beta^{\rm{max}}\triangleq \rm{cos}^{-1}(\zeta^{\rm{min}} )$, $c^{\rm{max}}$ is the center and $r^{\rm{max}}$ is the radius of the circle. 
\end{fac}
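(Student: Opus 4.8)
The plan is to separate the claim into an algebraic part---the equivalence between the LMI $f_{\mathcal{D}_{O}}(z)\ge 0$ and membership in a disk---and a geometric part---the derivation of the closed forms \eqref{eq:cmax}--\eqref{eq:rmax} that make this disk an inner approximation of the cardioid related to $\zeta^{\rm{min}}$.

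For the algebraic part I would simply add the three matrix terms in \eqref{eq:mp_1} to obtain the Hermitian matrix
\[
f_{\mathcal{D}_{O}}(z)=\begin{bmatrix} r^{\rm{max}} & z-c^{\rm{max}}\\ \bar z-c^{\rm{max}} & r^{\rm{max}}\end{bmatrix},
\]
and then apply the positive-semidefiniteness test for a $2\times 2$ Hermitian matrix (equivalently, a Schur complement about a diagonal entry, using $r^{\rm{max}}>0$). Since $c^{\rm{max}}$ is real, the determinant condition reads $(r^{\rm{max}})^{2}-|z-c^{\rm{max}}|^{2}\ge 0$, i.e.\ $|z-c^{\rm{max}}|\le r^{\rm{max}}$. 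This shows the LMI is exactly the disk of radius $r^{\rm{max}}$ centered at $c^{\rm{max}}$, identifying center and radius as asserted.

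To obtain the formulas, I would use that the cardioid boundary is the image under $z=e^{sT_{s}}$ of the constant-damping ray $s=w_{n}(-\zeta^{\rm{min}}+\jmath\sqrt{1-(\zeta^{\rm{min}})^{2}})$. Writing $\zeta^{\rm{min}}=\cos\beta^{\rm{max}}$, $\sqrt{1-(\zeta^{\rm{min}})^{2}}=\sin\beta^{\rm{max}}$ and $\phi\triangleq w_{n}T_{s}$, this boundary is the logarithmic spiral $z(\phi)=e^{-\phi\cos\beta^{\rm{max}}}e^{\jmath\phi\sin\beta^{\rm{max}}}$, restricted to $\theta\triangleq\phi\sin\beta^{\rm{max}}\in[0,\pi]$ for one-to-one sampling. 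The real-axis-centered inner circle should touch the spiral at its highest point, so I would maximize $\operatorname{Im} z(\phi)$; a short computation gives $\tfrac{d}{d\phi}\operatorname{Im} z(\phi)=e^{-\phi\cos\beta^{\rm{max}}}\sin(\beta^{\rm{max}}-\theta)$, whose zero in range is $\theta=\beta^{\rm{max}}$, i.e.\ $\phi=\beta^{\rm{max}}/\sin\beta^{\rm{max}}$. Substituting back, the Cartesian coordinates of that point are $\big(e^{-\beta^{\rm{max}}/\tan\beta^{\rm{max}}}\cos\beta^{\rm{max}},\,e^{-\beta^{\rm{max}}/\tan\beta^{\rm{max}}}\sin\beta^{\rm{max}}\big)$, which are exactly $(c^{\rm{max}},r^{\rm{max}})$ from \eqref{eq:cmax}--\eqref{eq:rmax}. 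Because the circle of radius $r^{\rm{max}}$ centered at $(c^{\rm{max}},0)$ passes through this point with a horizontal tangent, and the spiral is horizontal there as well, the two curves are tangent at the spiral's topmost point.

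I expect the algebraic equivalence to be routine, so the substantive step is the geometric one---and within it, certifying that the disk lies entirely inside the cardioidal region rather than merely touching it at one point. Concretely this requires showing that the squared distance $D(\phi)=|z(\phi)-c^{\rm{max}}|^{2}$ along the spiral attains its global minimum $(r^{\rm{max}})^{2}$ at $\phi=\beta^{\rm{max}}/\sin\beta^{\rm{max}}$ over $\theta\in[0,\pi]$. Verifying $D'=0$ and $D''>0$ at that point is immediate, but ruling out further critical points of $D$ below the tangent level---especially near $\theta\to\pi$, where the spiral returns toward the negative real axis---is where the real care lies; I would expect \cite{Rosinova2014} either to supply this monotonicity argument or to adopt the tangent circle as the defining inner construction.
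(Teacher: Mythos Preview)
The paper does not prove this Fact; it is stated with a citation to \cite{Rosinova2014} and no argument is supplied in the text, so there is no in-paper proof to compare against.

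On its own merits your argument is correct. The algebraic reduction of the $2\times 2$ LMI to the disk $|z-c^{\rm{max}}|\le r^{\rm{max}}$ via positive semidefiniteness of the Hermitian matrix is standard and right; note in passing that the set description \eqref{def_abs_z} writes $|z|\le r^{\rm{max}}$, which is inconsistent with both the LMI and the accompanying prose (``$c^{\rm{max}}$ is the center'')---the disk you obtain, centered at $c^{\rm{max}}$, is clearly the intended one. Your geometric derivation of \eqref{eq:cmax}--\eqref{eq:rmax}, parametrizing the constant-damping spiral $z(\phi)=e^{-\phi\cos\beta^{\rm{max}}}e^{\jmath\phi\sin\beta^{\rm{max}}}$ and locating its point of maximal imaginary part, is also correct and recovers exactly the stated center and radius. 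You are right to flag that tangency at the topmost point alone does not certify global containment of the disk inside the cardioid over $\theta\in[0,\pi]$; this is the one substantive residual step, and your expectation that the source either supplies a monotonicity argument for $D(\phi)$ or simply adopts the tangent-circle construction as the defining inner approximation is the appropriate way to close it.
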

  
 \begin{fac}[\cite{Rosinova2014}]\label{fact:ellipse_rosinova}The set  approximating the cardioid related to $\zeta^{\rm{min}} $ by means of an ellipse is given by
\begin{eqnarray}\label{def_abs_z}
\mathcal{D}_{O}=\left\{z~\in~\mathbb{C}: |z|\leq r_{e}^{\rm{max}} ,~0~<~r_{e}^{\rm{max}} <~1 \right\}
\end{eqnarray}
and is equivalent to the LMI  $f_{\mathcal{D}_{O}}(z) \geq 0$, where
\begin{eqnarray}\label{eq:mp_2}\nonumber
f_{\mathcal{D}_{O}}(z)&=&\left[\begin{array}{cc}
                1 & -e c^{\rm{max}} \\
                -e c^{\rm{max}} & 1
              \end{array}
\right]\\  \nonumber
&+&\left[\begin{array}{cc}
                0 & (e-f)/2 \\
                (e+f)/2 & 0
              \end{array}
\right]z\\
&+&\left[\begin{array}{cc}
                0 & (e+f)/2 \\
                (e-f)/2 & 0
              \end{array}
\right]\bar{z}, \end{eqnarray}
such that 
\begin{eqnarray}\label{eq:a_major}
a&\triangleq& c^{\rm{max}}+e^{-\pi/\rm{tan}(\beta^{\rm{max}})},\\ \label{eq:b_minor} 
b&\triangleq& r^{\rm{max}},\\  \label{eq:e_ellipse}
e&\triangleq&1/a,\\ \label{eq:f_ellipse}
f&\triangleq&1/b,
\end{eqnarray}
where $c^{\rm{max}}$ is the center, $a$ is the real semi-axis, $b$ is the imaginary semi-axis, and $r_{e}^{\rm{max}}$ is the distance from the center $c^{\rm{max}}$  and an arbitrary point $P$ of the ellipse and is given in polar coordinates by
\begin{eqnarray}\label{eq:distance_center_1}
r_{e}^{\rm{max}}\triangleq\frac{a b}{\sqrt{a^{2}\rm{sin}^{2}(\vartheta)+b^{2}\rm{cos}^2(\vartheta)}},
\end{eqnarray}
where $\vartheta$ is the angle between $r_{e}^{\rm{max}}$ and the real axis. 
 \end{fac}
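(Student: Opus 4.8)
The plan is to verify the equivalence between the elliptical disk $\mathcal{D}_O$ and the LMI $f_{\mathcal{D}_O}(z)\geq 0$ by a direct determinant (Schur-complement) computation, and then to reconcile the resulting Cartesian inequality with the polar description \eqref{eq:distance_center_1}. Following the D-stability characteristic-function form of \cite{Chilali1996}, I would read $f_{\mathcal{D}_O}(z)$ in \eqref{eq:mp_2} as a Hermitian $2\times 2$ matrix $L+Mz+M^{T}\bar z$ evaluated at a point $z\in\mathbb{C}$, so that $z\in\mathcal{D}_O$ must be equivalent to $f_{\mathcal{D}_O}(z)\succeq 0$.

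First I would substitute $z=x+\jmath y$ and $\bar z=x-\jmath y$, with $x,y$ real, into the off-diagonal entries. The upper-right entry $-ec^{\mathrm{max}}+\tfrac{e-f}{2}z+\tfrac{e+f}{2}\bar z$ collapses to $e(x-c^{\mathrm{max}})-\jmath f y$, since the coefficients of $x$ sum to $e$ while those of $y$ combine to $-\jmath f$; the lower-left entry is its complex conjugate $e(x-c^{\mathrm{max}})+\jmath f y$. Hence $f_{\mathcal{D}_O}(z)$ is genuinely Hermitian with unit diagonal, and because the diagonal is positive, $f_{\mathcal{D}_O}(z)\succeq 0$ holds if and only if its determinant is nonnegative, i.e. $1-\bigl(e^{2}(x-c^{\mathrm{max}})^{2}+f^{2}y^{2}\bigr)\geq 0$.

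Substituting $e=1/a$ and $f=1/b$ from \eqref{eq:e_ellipse}--\eqref{eq:f_ellipse} turns this into the interior-of-ellipse inequality $\tfrac{(x-c^{\mathrm{max}})^{2}}{a^{2}}+\tfrac{y^{2}}{b^{2}}\leq 1$, describing an ellipse centered at $c^{\mathrm{max}}$ on the real axis with real semi-axis $a$ and imaginary semi-axis $b$. To recover \eqref{eq:distance_center_1}, I would set $x-c^{\mathrm{max}}=r\cos\vartheta$, $y=r\sin\vartheta$ and solve the boundary equation for $r$, which yields $r=ab/\sqrt{a^{2}\sin^{2}\vartheta+b^{2}\cos^{2}\vartheta}=r_e^{\mathrm{max}}$, so that $\mathcal{D}_O=\{z:|z-c^{\mathrm{max}}|\leq r_e^{\mathrm{max}}\}$ is exactly the closed elliptical disk. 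The semi-axis choices \eqref{eq:a_major}--\eqref{eq:b_minor} then tie the ellipse to the cardioid: $b=r^{\mathrm{max}}$ makes the top vertex coincide with the spiral point at $\theta=\beta^{\mathrm{max}}$ (cf.\ \eqref{eq:rmax}), while $a=c^{\mathrm{max}}+e^{-\pi/\tan(\beta^{\mathrm{max}})}$ places the left vertex at the spiral's crossing of the negative real axis ($\theta=\pi$).

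The main obstacle is not the LMI equivalence, which reduces to the one-line determinant condition above once the off-diagonals are simplified, but rather the geometric justification that this particular ellipse is a valid (conservative) approximation of the non-convex cardioid. Verifying that the elliptical boundary stays inside the cardioid would require comparing it against the logarithmic-spiral boundary $r=e^{-\theta/\tan(\beta^{\mathrm{max}})}$ over the full angular range $\theta\in[0,\pi]$, which is where any genuine analytic work lies; for the equivalence asserted in the Fact, however, this containment is encoded as a design choice in the semi-axis definitions rather than an independent claim to be proved.
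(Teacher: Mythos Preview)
Your argument is correct: the Schur-complement reduction of the $2\times 2$ Hermitian matrix to the determinant condition $1-e^{2}(x-c^{\mathrm{max}})^{2}-f^{2}y^{2}\geq 0$ is the right computation, and the polar reparametrization recovering \eqref{eq:distance_center_1} is standard. You also correctly flag that the set description should read $|z-c^{\mathrm{max}}|\leq r_e^{\mathrm{max}}$ rather than $|z|\leq r_e^{\mathrm{max}}$; the paper's \eqref{def_abs_z} is reused verbatim from the circle case in Fact~\ref{fact:circle} and is sloppy here.

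There is nothing to compare against, however: the paper does not prove Fact~\ref{fact:ellipse_rosinova}. It is stated as a result of \cite{Rosinova2014}, and when the authors later build on it in Fact~\ref{prop:ellipse_ricco} they explicitly write ``For more details, the reader is referred to corresponding proof \cite[Section~3]{Rosinova2014}.'' So your derivation supplies what the paper deliberately outsources. Your closing caveat is also apt: the LMI equivalence is trivial once the off-diagonals are simplified, and the only substantive geometric content---that this ellipse is an \emph{inner} approximation of the cardioid---is indeed a design choice encoded in \eqref{eq:a_major}--\eqref{eq:b_minor} rather than something the Fact asserts or proves.
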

 
\begin{remark}
It is assumed that $0 < \zeta^{\rm{min}} < 1$ in Facts \ref{fact:circle}-\ref{fact:ellipse_rosinova}. Replacing $\zeta^{\rm{min}}\rightarrow 0$ in \eqref{eq:cmax}-\eqref{eq:rmax}, we obtain $c^{\rm{max}}\rightarrow 0$ and $r^{\rm{max}}\rightarrow 1$. Given \eqref{eq:a_major}-\eqref{eq:b_minor}, if  $c^{\rm{max}}\rightarrow 0$  and $r^{\rm{max}}\rightarrow 1$, then $a\rightarrow 1$ and $b\rightarrow 1$. Conversely, if $\zeta^{\rm{min}}\rightarrow 1$ in \eqref{eq:cmax}-\eqref{eq:rmax}, then $c^{\rm{max}}\rightarrow e^{-1}$  and \hbox{$r^{\rm{max}}\rightarrow 0$}. Likewise, we have $a
\rightarrow \nexists$  and $b\rightarrow 0$. Thus, Facts \ref{fact:circle}-\ref{fact:ellipse_rosinova} have been proposed just for {\it underdamped} systems. We also note that the regions where $\zeta^{\rm{min}} \rightarrow 1$ tend to be too small and cannot be well approximated by the LMIs given in Facts \ref{fact:circle}-\ref{fact:ellipse_rosinova}.
 \end{remark}
 
Figures \ref{fig:mp_circles_ellipses}a and \ref{fig:mp_circles_ellipses}b show the inner approximation of cardioids by means of circles (Fact \ref{fact:circle}) and  ellipses (Fact \ref{fact:ellipse_rosinova}), respectively. Note that, the larger $\zeta^{\rm{min}}$ is, the better is the approximation provided by the ellipse compared to the circle.
 \begin{figure}[H]
\begin{center}
\includegraphics[scale=0.5]{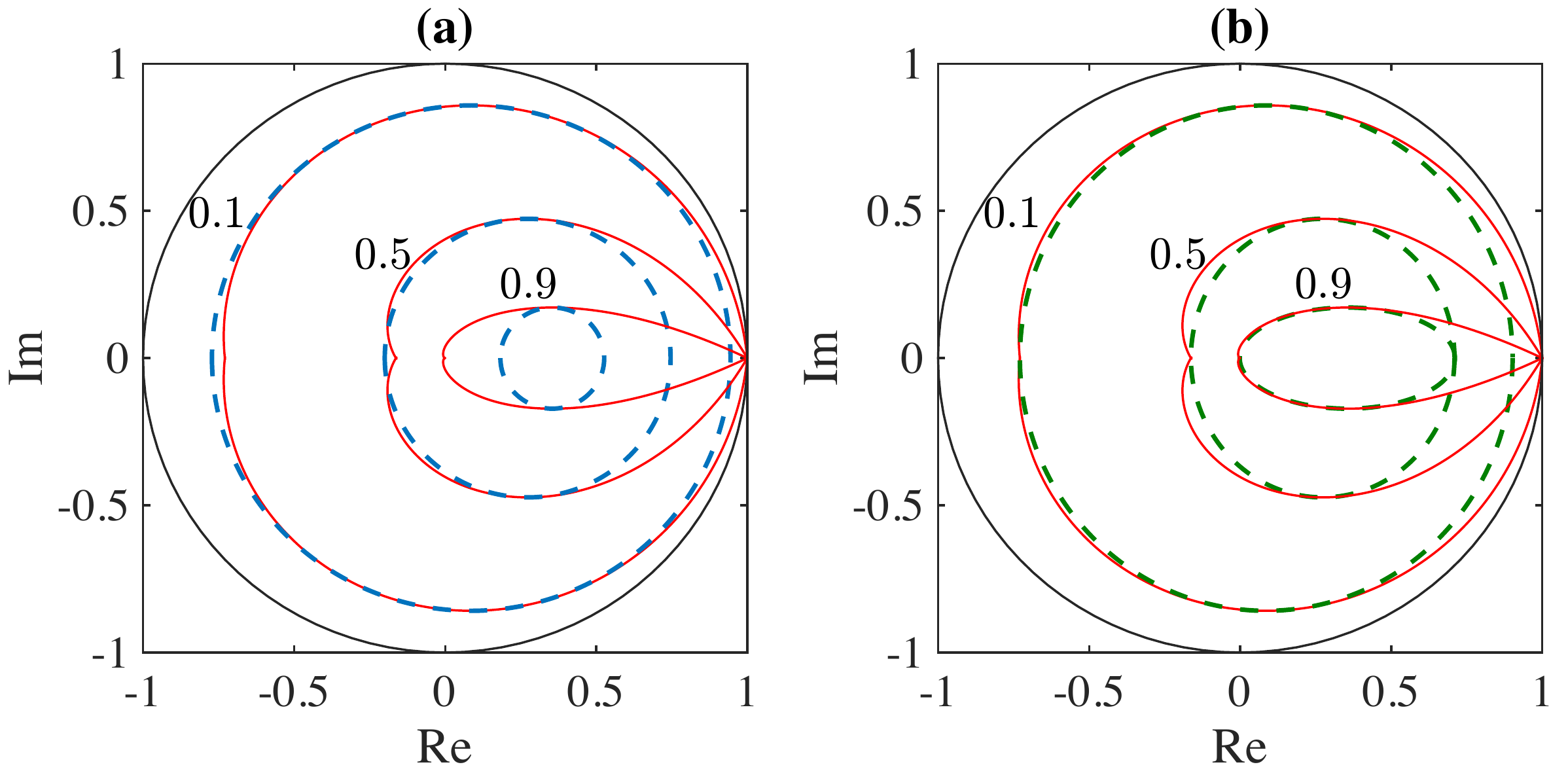} 
\end{center}
\caption{Convex approximations of the auxiliary information $O^{\textrm{max}}_{s}$ (cardioids in red for $\zeta^{\rm{min}} =\{0.1, 0.5, 0.9\}$ ): (a) using  circles (Fact \ref{fact:circle}) in dashed blue and (b) using ellipses (Fact \ref{fact:ellipse_rosinova}) in dashed green. 
The more damped the system is, the smaller is the area of the corresponding ellipse on the $z$-plane.}
\label{fig:mp_circles_ellipses}                                 
\end{figure}  

Next, we discuss under which conditions the convex ellipsoidal region described by (\ref{eq:mp_2}) is more (or not) useful than the circle (\ref{eq:mp_1}). Indeed, if we vary $\zeta^{\rm{min}} $, then the eccentricity of the ellipse (\ref{eq:mp_2}) also varies and generate three possible different regions. 
\begin{remark}\label{rem:ellipse_vertical} If $a=b$, then the ellipse  (\ref{eq:mp_2}) degenerates  into the circle  (\ref{eq:mp_1}). If $a< b$ then the region (\ref{eq:mp_2}) becomes a vertically-oriented ellipse in which the imaginary semi-axis $b$ is larger than real semi-axis $a$.  
In this case, we verify that the region (\ref{eq:mp_2}) is useless for  approximating the cardioid region for underdamped systems, such that the circle approximation (\ref{eq:mp_1}) should be used rather.
Conversely, if $a> b$ then the region (\ref{eq:mp_2}) becomes a horizontally-oriented ellipse which is useful to approximate the cardioid.
\end{remark}
Now we present the results that indicate for which values of $\zeta^{\rm{min}} $ the ellipse (\ref{eq:mp_2}) is horizontally-oriented like the cardioid it approximates.
 \begin{fac} \label{rem:ellipse_horizontal} 
Assume that $\zeta^{\rm{min}} \in (0,~1)$ for underdamped systems. If $0.6128<\zeta^{\rm{min}} <1$, then the region (\ref{eq:mp_2}) becomes a horizontally-oriented ellipse in which the real semi-axis $a$ is larger than imaginary semi-axis $b$.
 \end{fac}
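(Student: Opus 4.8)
The plan is to substitute the definitions \eqref{eq:cmax}--\eqref{eq:f_ellipse} into the inequality $a>b$ and reduce it to a single transcendental inequality in the variable $\psi \triangleq \beta^{\rm max} = \cos^{-1}(\zeta^{\rm min})$, noting that $\zeta^{\rm min}\in(0,1)$ corresponds to $\psi$ decreasing from $\pi/2$ to $0$. Using $c^{\rm max}=e^{-\psi\cot\psi}\cos\psi$, $r^{\rm max}=e^{-\psi\cot\psi}\sin\psi$, together with $a=c^{\rm max}+e^{-\pi\cot\psi}$ and $b=r^{\rm max}$, the desired inequality $a>b$ becomes equivalent to
\begin{eqnarray}\label{eq:plan_key}
e^{-\psi\cot\psi}\left(\cos\psi-\sin\psi\right)+e^{-\pi\cot\psi}>0 .
\end{eqnarray}
Since $\zeta^{\rm min}>0.6128$ is equivalent to $\psi<\psi_{0}$ with $\psi_{0}\triangleq\cos^{-1}(0.6128)\approx 0.912$, it suffices to show that \eqref{eq:plan_key} holds precisely on $(0,\psi_{0})$.

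First I would dispose of the easy range $\psi\in(0,\pi/4]$, i.e. $\zeta^{\rm min}\geq 1/\sqrt{2}\approx 0.707$: there $\cos\psi\geq\sin\psi$, so both summands in \eqref{eq:plan_key} are nonnegative and the second is strictly positive, giving the inequality at once. The substantive range is therefore $\psi\in(\pi/4,\psi_{0})$, where $\sin\psi>\cos\psi$. There I would rewrite \eqref{eq:plan_key} as $e^{-\pi\cot\psi}>e^{-\psi\cot\psi}(\sin\psi-\cos\psi)$, a comparison of two strictly positive quantities, and take logarithms to obtain the equivalent condition $\phi(\psi)<0$, where
\begin{eqnarray}\label{eq:plan_phi}
\phi(\psi)\triangleq(\pi-\psi)\cot\psi+\ln\!\left(\sin\psi-\cos\psi\right).
\end{eqnarray}
As $\psi\to(\pi/4)^{+}$ the logarithm diverges to $-\infty$ while $(\pi-\psi)\cot\psi$ stays finite, so $\phi\to-\infty$; and by the very definition of $\psi_{0}$ as the point where $a=b$ one has $\phi(\psi_{0})=0$.

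The crux is then to prove that $\phi$ is strictly increasing on $(\pi/4,\psi_{0})$, for then $\phi(\psi)<\phi(\psi_{0})=0$ throughout that interval and the claim follows, with $\psi_{0}$ (equivalently $\zeta^{\rm min}\approx 0.6128$) identified as the unique sign change. Differentiating \eqref{eq:plan_phi} gives
\begin{eqnarray}\label{eq:plan_phiprime}
\phi'(\psi)=\frac{\cos\psi+\sin\psi}{\sin\psi-\cos\psi}-\cot\psi-(\pi-\psi)\csc^{2}\psi ,
\end{eqnarray}
and one checks that the first term dominates: it blows up like $+\infty$ as $\psi\to(\pi/4)^{+}$ and still exceeds the remaining two terms at $\psi_{0}$. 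The main obstacle is making ``$\phi'>0$ throughout'' rigorous rather than merely verifying the endpoints, since \eqref{eq:plan_phiprime} is transcendental; I would handle this by bounding each term on the short interval $(\pi/4,\psi_{0})$ — for instance via $\sin\psi-\cos\psi=\sqrt{2}\,\sin(\psi-\pi/4)$ together with the monotonicity of $\cot\psi$ and $\csc^{2}\psi$ — so as to exclude any vanishing of $\phi'$ there. A secondary, essentially cosmetic point is that the threshold $0.6128$ is only the numerical root of the transcendental equation $a(\zeta^{\rm min})=b(\zeta^{\rm min})$; strictly speaking the statement asserts the existence of a threshold $\zeta^{*}\approx 0.6128$, whose uniqueness is exactly what the monotonicity of $\phi$ establishes.
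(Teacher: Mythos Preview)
Your approach is correct in spirit and considerably more analytical than the paper's own argument. The paper simply sets $a=b$, writes the resulting transcendental equation $\gamma\cos\beta^{\rm max}+e^{-\pi/\tan\beta^{\rm max}}=\gamma\sin\beta^{\rm max}$ with $\gamma=e^{-\beta^{\rm max}/\tan\beta^{\rm max}}$, and then appeals to a numerical solver to obtain $\beta^{\rm max}\approx 52.2^{\circ}$, i.e.\ $\zeta^{\rm min}\approx 0.6128$, asserting without further justification that this is the unique value at which the ellipse switches orientation. There is no monotonicity analysis, no treatment of the easy range $\zeta^{\rm min}\geq 1/\sqrt{2}$, and no argument that the root is unique.

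Your decomposition into the trivial range $\psi\in(0,\pi/4]$ and the nontrivial range $(\pi/4,\psi_{0})$, followed by the logarithmic reformulation via $\phi(\psi)$ and the monotonicity argument, is a genuinely different and more rigorous route. What it buys is an actual proof that the threshold is unique, rather than a numerical observation; what the paper's version buys is brevity, at the cost of leaving the uniqueness claim unproven. Your acknowledged obstacle---bounding $\phi'$ away from zero on the short interval $(\pi/4,\psi_{0})$---is real but tractable with the interval estimates you sketch, and even left as a numerical check it is no less rigorous than what the paper does. In short, your proposal subsumes the paper's argument and goes further toward an honest proof.
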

\begin{proof}
From $a=b$ and the equations (\ref{eq:cmax})-(\ref{eq:rmax}), we have 
\begin{eqnarray}\label{eq:proof_circle_eq}
\gamma \rm{cos}(-\beta^{\rm{max}})+e^{-\pi/\rm{tan}(\beta^{\rm{max}})}=\gamma \rm{sin}(-\beta^{\rm{max}}),
\end{eqnarray}
where $\gamma=e^{-\beta^{\rm{max}}/\rm{tan}(\beta^{\rm{max}})}$. By using numerical solvers for (\ref{eq:proof_circle_eq}) and $\zeta^{\rm{min}} \in (0,~1)$, we obtain $\beta^{\rm{max}} \approx 52.2^{\circ}$ and $\zeta^{\rm{min}} \approx 0.6128$. Therefore, $\zeta^{\rm{min}} \approx 0.6128$ is the critical value for which the ellipse described by (\ref{eq:mp_2})  changes from the vertical orientation \hbox{($0<\zeta^{\rm{min}} <0.6128$}; see Remark \ref{rem:ellipse_vertical}) to the horizontal orientation ($0.6128 < \zeta^{\rm{min}} < 1$). 
\end{proof}
In system identification, we are not interested in the inner approximations \eqref{eq:mp_1}  and \eqref{eq:mp_2}  of the region described by $\zeta^{\rm{min}} $ on the $z$-plane. For this reason, the next result rewrites the parametrization of the real semi-axis $a$ and the center $c^{\textrm{max}}$ such that the new $D$-stability region could be more useful for constrained subspace identification. In so doing, the issues raised in Remark \ref{rem:ellipse_vertical} and  Fact $\ref{rem:ellipse_horizontal}$ are circumvented.

\begin{fac}\label{prop:ellipse_ricco}
Assume that the parameters $c^{\rm{max}}$, $a$ and $e$ are given by (\ref{eq:cmax}), (\ref{eq:a_major}) and (\ref{eq:e_ellipse}), respectively. The new ellipse approximating the cardioid corresponding to $\zeta^{\rm{min}}$ is given by \eqref{def_abs_z}-\eqref{eq:mp_2}, replacing $c^{\rm{max}}$ by $c_{n}^{\rm{max}}\triangleq c^{\rm{max}}\mu$, $a$ by $a_{n} \triangleq a \mu$, $e$ by $e_{n}\triangleq1/a_{n}$, where  ${\displaystyle \mu \triangleq \frac{1}{a+c^{\rm{max}}}}$ and $r_{e,n}^{\rm{max}}$ is the new distance from the new center $c_{n}^{\rm{max}}$ given in a similar way as presented in (\ref{eq:distance_center_1}).
\end{fac}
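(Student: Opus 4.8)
The plan is to obtain the new region as the image of Rosinova's inner ellipse (Fact~\ref{fact:ellipse_rosinova}) under a purely horizontal dilation about the origin, scaled just enough that the ellipse stops being inscribed in the cardioid and instead touches the unit circle at the cardioid's rightmost point $z=1$.

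First I would record the two real-axis landmarks. The inner ellipse of Fact~\ref{fact:ellipse_rosinova} is centered at $c^{\rm{max}}$ with real semi-axis $a$, so its rightmost real-axis point is $c^{\rm{max}}+a$; since that ellipse lies strictly inside the open unit disk, $c^{\rm{max}}+a<1$. The cardioid, being the image of the cone boundary $s=-\sigma\pm\jmath\sigma\tan(\beta^{\rm{max}})$ under $z=e^{sT_{s}}$, passes through $z=1$ at $\sigma=0$, and $z=1$ is its rightmost point; the gap $1-(c^{\rm{max}}+a)>0$ is exactly the slice near $z=1$ that the inner ellipse wrongly discards, which is what makes it unsuitable for constrained identification. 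I would then fix the dilation $(x,y)\mapsto(\mu x,y)$ by requiring $\mu(c^{\rm{max}}+a)=1$, giving $\mu=1/(a+c^{\rm{max}})>1$. Applying this map to $\tfrac{(x-c^{\rm{max}})^{2}}{a^{2}}+\tfrac{y^{2}}{b^{2}}\le1$ produces $\tfrac{(x-\mu c^{\rm{max}})^{2}}{(\mu a)^{2}}+\tfrac{y^{2}}{b^{2}}\le1$, so the center becomes $c_{n}^{\rm{max}}=\mu c^{\rm{max}}$, the real semi-axis becomes $a_{n}=\mu a$, the imaginary semi-axis $b$ (hence $f=1/b$) is untouched, and therefore $e_{n}=1/a_{n}$. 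Substituting $c^{\rm{max}}\to c_{n}^{\rm{max}}$ and $e\to e_{n}$ into \eqref{eq:mp_2}, and $a\to a_{n}$ into \eqref{eq:distance_center_1}, yields the claimed LMI and the new polar radius $r_{e,n}^{\rm{max}}$ measured from $c_{n}^{\rm{max}}$. This is the routine part.

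Separately I would verify that the construction repairs the orientation defect flagged in Remark~\ref{rem:ellipse_vertical} and Fact~\ref{rem:ellipse_horizontal}: because the dilation stretches only the real axis while leaving $b$ fixed, I would check that $a_{n}=\mu a>b$ for every $\zeta^{\rm{min}}\in(0,1)$, so that the new region is always a horizontally-oriented (hence useful) ellipse, even in the range $\zeta^{\rm{min}}<0.6128$ where the inner ellipse degenerated to a vertical one.

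The hard part will be the conservativeness claim, namely that the enlarged ellipse is a genuine \emph{outer} region for the cardioid. Since $\mu>1$ the new ellipse contains the inner ellipse of Fact~\ref{fact:ellipse_rosinova} and reaches $z=1$, so it covers the region near the rightmost point that the inner one missed; to prove full containment I would write the cardioid in polar form $r(\theta)=e^{-\theta/\tan(\beta^{\rm{max}})}$ and show it stays inside the dilated ellipse for $\theta\in[0,\pi]$, the lower half following by symmetry about the real axis. I expect this comparison to be the delicate step: the cardioid attains its maximum imaginary part $r^{\rm{max}}$ at $\theta=\beta^{\rm{max}}$, directly above $x=c^{\rm{max}}$, the very point where the inner ellipse was tangent, whereas the dilation shifts the ellipse's topmost point rightward to $x=\mu c^{\rm{max}}$. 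Containment near the top is therefore not automatic and would have to be secured either by a direct inequality on the two boundary curves or, as the accompanying figures indicate, by a numerical verification over the stated regime $0<\zeta^{\rm{min}}<1$.
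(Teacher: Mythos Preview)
Your construction is essentially the paper's: both fix the scale factor $\mu=1/(a+c^{\rm{max}})$ so that the rightmost real point of the ellipse is pushed out to $z=1$, and both read off $a_{n}=a\mu$, $c_{n}^{\rm{max}}=c^{\rm{max}}\mu$, $e_{n}=1/a_{n}$ with $b$ and $f$ untouched. The paper reaches this via the polar parametrization $x=a\cos(\vartheta)+c^{\rm{max}}$, observes $x^{\rm{max}}=a+c^{\rm{max}}$ at $\vartheta=0$, and divides through; your phrasing as the horizontal dilation $(x,y)\mapsto(\mu x,y)$ applied to the Cartesian ellipse equation is a cleaner way to say exactly the same thing.

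Where you diverge is in the scope of what you set out to prove. The paper's argument stops once it has the reparametrization and checks $x_{n}^{\rm{max}}=1$; the horizontal-orientation claim (your inequality $a_{n}>b$ for all $\zeta^{\rm{min}}\in(0,1)$) is \emph{asserted} in passing but not derived, and the outer-containment claim---your ``hard part''---is not addressed at all in the proof. In fact the paper's own discussion after the Fact says the new ellipse ``encloses \emph{most} of the cardioid area,'' which signals that full containment of the cardioid is neither claimed nor true in general. So your delicate step near $\theta=\beta^{\rm{max}}$ is not a gap you need to close for the statement as written: the Fact only asserts a reparametrized ellipse, not a superset of the cardioid. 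If you want to keep that analysis, frame it as a supplementary observation rather than a requirement of the proof; otherwise your argument already covers everything the paper actually demonstrates, and a bit more.
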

\begin{proof} Note that we propose here a new parametrization of the ellipse defined in Fact \ref{fact:ellipse_rosinova}. For more details, the reader if referred to corresponding proof \cite[Section~3]{Rosinova2014}. The polar transformation of the real semi-axis $x$ is given by
\begin{eqnarray}\label{eq:x_polar}
x\triangleq a\textrm{cos}(\vartheta)+c^{\rm{max}}. 
\end{eqnarray} 
The maximum value of $x$ is achieved  when $\vartheta=0$ or $\vartheta=2\pi$ where $\vartheta \in \left[0,~2\pi \right]$. Hence, the latter equation can be written as
\begin{eqnarray}\label{eq:x_polar_max}
x^{\rm{max}}=a+c^{\rm{max}}. 
\end{eqnarray} 
Consider that the maximum value of $x^{\rm{max}}$ should be equal to $1$ (limited by the unit circle) and also that the ellipse becomes a circle if $x^{\rm{max}}=1$. In other words, we guarantee that the new re-parameterized ellipse is horizontally-oriented for $0<\zeta^{\rm min}<1$. To guarantee that $x$ is limited by the unit circle ($x^{\rm{max}}=1$), multiply both sides of (\ref{eq:x_polar}) by the inverse of the right part of (\ref{eq:x_polar_max}), such that 
\begin{eqnarray} \nonumber
x_{n}=\frac{x}{a+c^{\rm{max}}}&=&\frac{a}{a+c^{\rm{max}}}\textrm{cos}(\vartheta)+\frac{c^{\rm{max}}}{a+c^{\rm{max}}},\\ 
&=&a \mu \textrm{cos}(\vartheta)+ c^{\rm{max}}\mu,  \label{eq:x_polar_n}
\end{eqnarray}
where $\mu \triangleq \frac{1}{a+c^{\rm{max}}}$.
Comparing (\ref{eq:x_polar_n}) to (\ref{eq:x_polar}), we verify that the new horizontal ellipse, limited by the unity circle, have the following new parameters $a_{n}\triangleq a \mu$
and $c_{n}^{\rm{max}}\triangleq c^{\rm{max}}\mu$, 
where $a_{n}$ is the new radius and $c_{n}^{\rm{max}}$ is the new center of the ellipse region (\ref{eq:mp_2}). To complete the proof we also note that the maximum value of (\ref{eq:x_polar_n}) is equal to $1$  ($x_{n}^{\rm{max}}=1$) when $\vartheta=0$ or $\vartheta=2\pi$.
\end{proof}
Fig. \ref{fig:mp_region_z_ellipse3} compares the convex approximations of the cardioid using the characteristic equation of the more conservative ellipse defined in Fact \ref{prop:ellipse_ricco}. Note that, unlike \eqref{eq:mp_1} or \eqref{eq:mp_2} in Fig. \ref{fig:mp_circles_ellipses}a-b, the proposed ellipse defined by Fact \ref{prop:ellipse_ricco} encloses most of the cardioid area.
It is important to point out that, unlike the other ellipse approximations, only the ellipse re-parameterized  by Fact \ref{prop:ellipse_ricco} encompasses the region with eigenvalues close to $1$. This region is very important for system identification due to the effect of high sampling in the localization of the system poles, for instance.
 \begin{figure}[H]
\begin{center}
\includegraphics[scale=0.5]{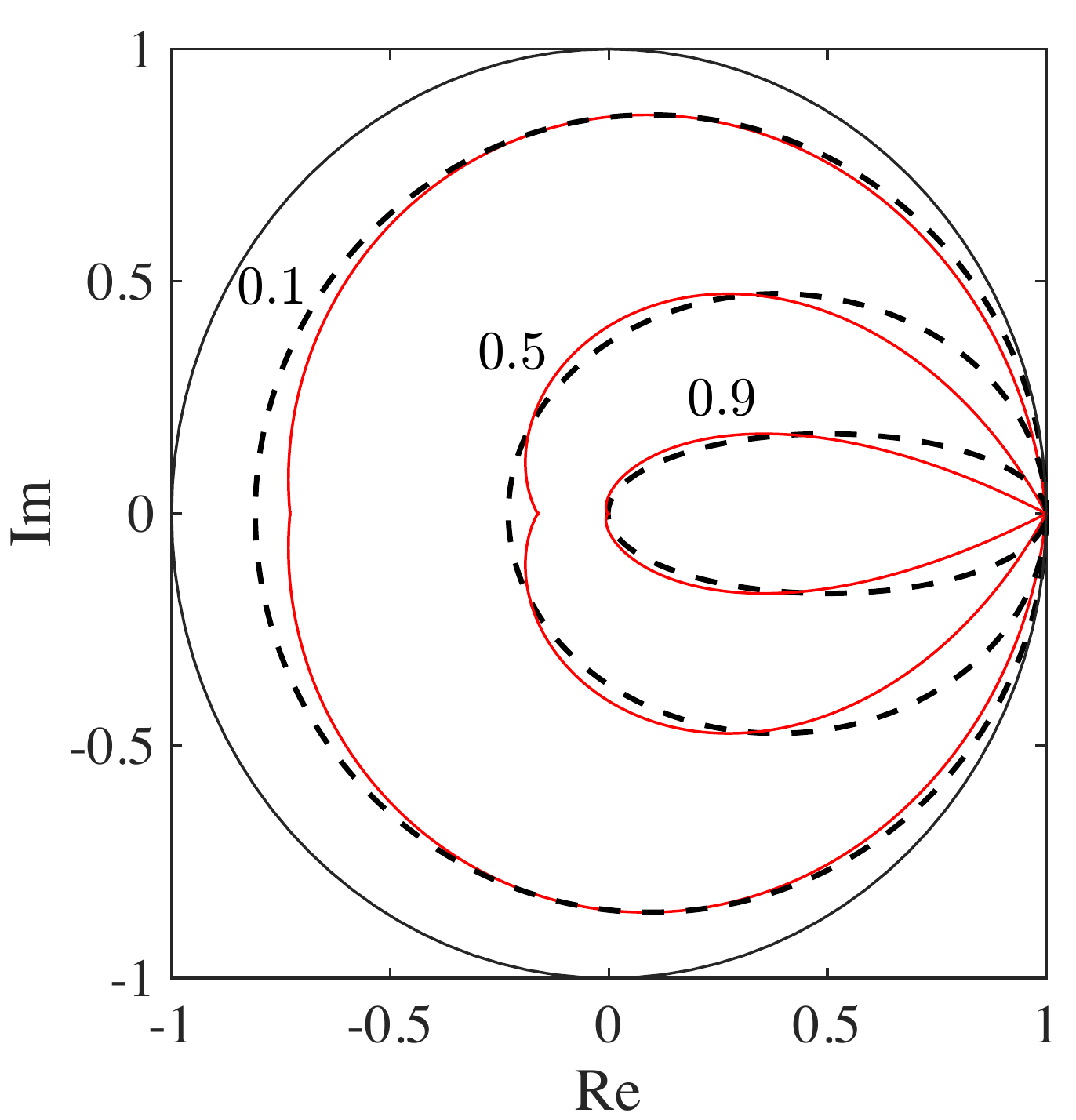} 
\end{center}
\caption{Convex approximations of the auxiliary information $O^{\textrm{max}}_{s}$ for the cardioids in red for $\zeta^{\rm{min}} =\{0.1, 0.5, 0.9\}$ using more conservative ellipses (Fact \ref{prop:ellipse_ricco}) in dashed black. For comparison, see Fig. \ref{fig:mp_circles_ellipses}.}
\label{fig:mp_region_z_ellipse3}                                 
\end{figure}  

\subsection{Period between damped oscillations} \label{sec:period_oscillations}
Initially, assume that $T_{s}\le T^{\rm{max}}_{d}/4$ such that the eigenvalues of $A$ are located on the right-half plane of the $z$-plane.
For instance, consider the cases $w^{1,1}_{d}$ and $w^{1,2}_{d}$ (where $T^{1}_{s} <T^{2}_{s}$) illustrated in Fig. \ref{fig:mapping_regions2}b. The region described by $T_{d}$ on the $z$-plane should be described 
by a conic section at the origin and with inner angle $\theta^{\rm{max}}=w^{\rm{max}}_{d}T_{s}$, where $-\pi/2\le \theta^{\rm{max}} \le \pi/2$. The following result is a straightforward extension from \cite{Chilali1996}.
 \begin{fac}\label{fact:conic}The set that describes 
 the region corresponding to $w_{d}^{\rm{max}}$ by means of a conic sector at the origin and with inner angle $\theta^{\rm{max}}$ is given by
\begin{eqnarray}\label{def_conic_sector}
\mathcal{D}_{T}=\left\{z~\in~\mathbb{C}: \rm{tan}(\theta^{\rm{max}}) \rm{Re}(z)\geq \left|\rm{Im}(z)\right| \right\}
\end{eqnarray}
and is equivalent to the LMI region $f_{\mathcal{D}_{T}}(z) \geq 0$,
\begin{eqnarray}\label{eq:T_1} \nonumber 
f_{\mathcal{D}_{T}}(z)=0_{2 \times 2}&+&\left[\begin{array}{cc}
              \rm{sin}(\theta^{\rm{max}}) & -\rm{cos}(\theta^{\rm{max}})  \\
                \rm{cos}(\theta^{\rm{max}})  & \rm{sin}(\theta^{\rm{max}}) 
              \end{array}
\right]z\\ 
&+&\left[\begin{array}{cc}
                \rm{sin}(\theta^{\rm{max}}) & \rm{cos}(\theta^{\rm{max}})  \\
                -\rm{cos}(\theta^{\rm{max}})  & \rm{sin}(\theta^{\rm{max}}) 
              \end{array}
\right]\bar{z}.
\end{eqnarray}
\end{fac}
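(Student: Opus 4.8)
The plan is to verify the equivalence by a direct computation that exploits the structure of a characteristic function $f_{\mathcal{D}}(z)=\lambda+\beta z+\beta^{T}\bar z$, here with symmetric part $\lambda=0_{2\times 2}$ and
\[
\beta=\begin{bmatrix}\sin(\theta^{\rm{max}}) & -\cos(\theta^{\rm{max}})\\ \cos(\theta^{\rm{max}}) & \sin(\theta^{\rm{max}})\end{bmatrix}.
\]
Writing $z=x+\jmath y$ with $x=\mathrm{Re}(z)$, $y=\mathrm{Im}(z)$ and $\bar z=x-\jmath y$, and abbreviating $s\triangleq\sin(\theta^{\rm{max}})$, $c\triangleq\cos(\theta^{\rm{max}})$, the first step is to collect the three matrix terms into a single Hermitian matrix. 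A short calculation of $\beta z+\beta^{T}\bar z$ gives
\[
f_{\mathcal{D}_{T}}(z)=\begin{bmatrix}2sx & -2\jmath c y\\ 2\jmath c y & 2sx\end{bmatrix},
\]
whose off-diagonal entries are complex conjugates, confirming the matrix is Hermitian, as required for the LMI $f_{\mathcal{D}_{T}}(z)\geq 0$ to be meaningful.

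Next I would apply the standard positive-semidefiniteness criterion for a $2\times2$ Hermitian matrix, namely that both diagonal entries and the determinant be nonnegative. Here the diagonal entries are equal to $2sx$ and the determinant equals $4s^{2}x^{2}-4c^{2}y^{2}$, so the criterion reduces to the two conditions $sx\geq 0$ and $s^{2}x^{2}\geq c^{2}y^{2}$. Restricting to the relevant angular range $0\leq\theta^{\rm{max}}\leq\pi/2$ (where the sector lies in the right half-plane, consistent with the assumption $T_{s}\leq T_{d}^{\rm{max}}/4$) yields $s\geq 0$ and $c\geq 0$.

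The final step is to show that these two inequalities collapse to the single defining inequality of $\mathcal{D}_{T}$. In the nondegenerate case $\theta^{\rm{max}}\in(0,\pi/2)$ one has $\tan(\theta^{\rm{max}})>0$; from $sx\geq 0$ it follows that $sx=\sqrt{s^{2}x^{2}}\geq\sqrt{c^{2}y^{2}}=c|y|$, and dividing by $c>0$ gives $\tan(\theta^{\rm{max}})\,\mathrm{Re}(z)\geq|\mathrm{Im}(z)|$. Conversely, this single inequality returns $sx\geq 0$ (its right-hand side is nonnegative and $\tan(\theta^{\rm{max}})>0$) and, upon squaring, $s^{2}x^{2}\geq c^{2}y^{2}$, recovering positive semidefiniteness. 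Hence $f_{\mathcal{D}_{T}}(z)\geq 0$ holds precisely on $\mathcal{D}_{T}$.

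I expect no deep obstacle, since the statement is, as noted, a straightforward transcription of the conic-sector construction in \cite{Chilali1996}. The only points demanding care are the bookkeeping in assembling the Hermitian matrix from $\beta z+\beta^{T}\bar z$ and tracking the factor $\jmath$ in the off-diagonal terms, and the sign conventions on $\theta^{\rm{max}}$, which must be confined to $[0,\pi/2)$ so that $s,c\geq 0$ and the absolute value $|\mathrm{Im}(z)|$ produced by the determinant condition matches the set definition; the degenerate endpoint $\theta^{\rm{max}}=0$ (the positive real axis) can be checked by inspection, and the lower-half-plane portion follows by symmetry about the real axis.
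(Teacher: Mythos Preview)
Your computation is correct. Note, however, that the paper does not supply its own proof of this fact: it is stated as a ``straightforward extension from \cite{Chilali1996}'' and left without argument. Your direct verification---assembling the Hermitian matrix $\beta z+\beta^{T}\bar z$, reading off the $2\times2$ principal-minor conditions, and reducing them to the single inequality $\tan(\theta^{\rm max})\,\mathrm{Re}(z)\geq|\mathrm{Im}(z)|$ under the standing assumption $\theta^{\rm max}\in[0,\pi/2]$---is exactly the routine check one performs for any LMI region of this type, and it matches the construction in the cited reference. There is nothing to compare against in the paper itself.
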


The LMI region described by the Fact \ref{fact:conic} should be used to perfectly fit the region described by the period between damped oscillations  $T_d$ on the right-half part of the $z$-plane as presented in Fig. \ref{fig:mapping_regions2}b. However, for $w_d > w_d^{2,1}$, that is, $| \theta^{\rm{max}} | > \pi/2$ (see Fig. \ref{fig:mapping_regions2}b), the left-half part of the dynamical region cannot be written as a LMI region, because the corresponding region is not convex.

\begin{remark}
Recall that $\theta^{\rm{max}}=w^{\rm{max}}_{d}T_{s}$ and $\theta^{\rm{max}}\le\pi/2$ in the right-half part of the $z$-plane. From the latter we obtain $w^{\rm{max}}_{d}T_{s}\le\pi/2$. Replacing $w^{\rm{max}}_{d}=2\pi/T^{\rm{max}}_{d}$ in $w^{\rm{max}}_{d}T_{s}\le\pi/2$, we have that $T_{s}\le T^{\rm{max}}_{d}/4$,  meaning that, in our procedure, we need at least four samples by each damping period. From Nyquist's sampling theorem, each period must be sampled at least two times to avoid aliasing. However, in practice, the golden rule is to sample from six to ten times per period \cite{Gopal2003}. Thus, the non-convexity of left-half part is an issue only for poorly sampled systems, for which the LMI defined in Fact \ref{fact:conic} is not useful for subspace identification.
\end{remark}

\subsection{Settling time} \label{sec:settling_time} 
The region described by $t_{s}$ on the $z$-plane is given by a circle whose center and radius are given by  
\begin{eqnarray}\label{eq:c_s}
c_{s}&\triangleq &0,\\ \label{eq:rmax_s}
r_{s}^{\textrm{max}}&\triangleq& e^{-{\zeta w_{n}}^{\rm{min}}T_{s}}.
\end{eqnarray}

Therefore, the LMI defined in Fact \ref{fact:circle} can be used here replacing $c^{\rm{max}}$ by the new center $c_{s} = 0$ and replacing $r^{\rm{max}}$ by the radius $r_{s}^{\textrm{max}}$ in order to perfectly fit  the region shown in Fig. \ref{fig:mapping_regions3}b.  

Note that the Fact \ref{fact:circle} can be also used to ensure stability for the estimated model by setting $c_{s} = 0$ and $r_{s}^{\textrm{max}}=1$ in \eqref{eq:c_s}-\eqref{eq:rmax_s}. 
%
\subsection{Practical aspects to build LMI regions}\label{Sec:Bound_variables}
From sections \ref{background} and \ref{sec:Overshoot_revisited}-\ref{sec:settling_time}, we know that the auxiliary information related to $\hat{\zeta}$, $\hat{w}_{d}$ and $\widehat{\zeta w_{n}}$ is estimated using step response test data and the relations \eqref{eq:Mp_max}-\eqref{eq:ts_max}. Note that if more than one step response test is available, then the estimated auxiliary information can be obtained by means of the average of such parameters over the available tests. Conversely, we can also estimate $\hat{\zeta}$, $\hat{w}_{d}$ and $\widehat{\zeta w_{n}}$ from the average of the step response tests. In addition, tuning variables $\Delta_{\zeta}$, $\Delta_{w_{d}}$ and $\Delta_{\zeta w_{n}}$ for such parameters can be set by the standard deviation from the average value of the correspondent variables. Recall that during the identification process, the user should tune these parameters parsimoniously.

Since the region shown in Fig. \ref{fig:mapping_regions1}b is not convex, some convex approximations are proposed in \cite{botto2002,Rosinova2014} for control systems. In the LMI regions discussed in Section \ref{sec:convex_approx} for system identification, it is reasonable to be more conservative with the usage of uncertain \textit{prior} information. Recall that the \textit{prior} information may be uncertain due to noise and the fact that the corresponding dynamical regions presented in Section \ref{background} 
are exact only for second-order linear systems \eqref{eq:segunda_ordem}. To handle that, we set the parametrization of the LMI dynamical regions as follows:
\begin{eqnarray}\label{eq:zeta_max}
\zeta^{\rm{min}} &\triangleq& \hat{\zeta}-\Delta _{\zeta}, \\ \label{eq:wd_max}
w^{\rm{max}}_{d} & \triangleq& \hat{w}_{d}+\Delta_{w_{d}} ,\\ \label{eq:zeta_wn_max}
{\zeta w_{n}} ^{\rm{min}}& \triangleq& \widehat{\zeta w_{n}}-\Delta_{\zeta w_{n}} ,
\end{eqnarray}
where $0<\zeta^{\rm{min}} <1$, $0<w^{\rm{max}}_{d}<w_{s}/4$ and ${\zeta w_{n}} ^{\rm{min}}>0$, and  $\Delta _{\zeta}$, $\Delta_{w_{d}} $ and $\Delta_{\zeta w_{n}}$ are defined by the user as pointed out above. 
In Fig. \ref{fig:bound_regions} regions related to \eqref{eq:zeta_max}-\eqref{eq:wd_max} are exemplified. Observe that the effect of the tuning variable given by \eqref{eq:zeta_wn_max} is analogous to the tuning variable given by \eqref{eq:zeta_max}. In this case, note that the smaller the parameter is, the bigger is the area of the correspondingly parameter on the $z$-plane.

\begin{remark}
In fact, the choice of $\zeta^{\rm{min}}$, $w^{\rm{max}}_{d}$ and $\zeta w_{n}^{\rm{min}}$ is dependent of the process design and how deep is the knowledge about the \textit{prior} information of the process. On the other side, if the poles of the model are estimated in a more conservative region, then we obtain more degrees of freedom on the estimation procedure and also more chance to find poles near to the region of the dominant poles. For this reason, we choose the signals and the superscript max-min in the parameters of \eqref{eq:zeta_max}-\eqref{eq:zeta_wn_max}. So, it is crucial to observe that there is no guarantee that the estimated parameters $\hat{\zeta}$, $\hat{w}_{d}$ and $\widehat{\zeta w_{n}}$ are the true values. However, they are estimated parameters that can be a source of auxiliary information of the dominant dynamic of the system. In addition, the validation process is crucial on this step, deciding if the performance of the model is improved or not with the usage of auxiliary information.  
\end{remark}
In Procedure \ref{proc:6_1} we sum up all the steps and related equations in order to solve the problem of mapping constraints in the \textit{subspace identification with eigenvalue constraints} problem.

\begin{algori} {Constrained subspace identification: mapping constraint regions onto discrete-time}  \label{proc:6_1}
\\ 
\textbf{$1^{\rm{st}}$ Step} From dynamical data, estimate $\{A^{*},B^{*},C^{*},D^{*}\}$ by means of a standard subspace identification method.

\textbf{$2^{\rm{nd}}$ Step} Evaluate the step response in order to estimate the values of $O_{s}$, $T_{d}$ and $t_{s,\rm{1}\%}$.

\textbf{$3^{\rm{rd}}$ Step} Using \eqref{eq:Mp_max}-\eqref{eq:ts_max} obtain the parameters $\hat{\zeta}$, $\hat{w}_{d}$ and $\widehat{\zeta w_n}$, respectively, from the values estimated in the previous step.

\textbf{$4^{\rm{th}}$ Step} Building LMI regions: {\rm(i)} Estimate $\beta^{\rm{max}}=\rm{cos}^{-1}(\hat{\zeta})$ and set \eqref{eq:cmax}-\eqref{eq:rmax}. Then, determine \eqref{eq:a_major}-\eqref{eq:f_ellipse} in order to set \eqref{eq:mp_2} building the ellipse given by Fact \ref{fact:ellipse_rosinova}. Next, calculate ${\displaystyle \mu=\frac{1}{a+c^{\rm{max}}}}$, determine the new major axis as $a_{n}=a\mu$ and the new center as $c^{\rm{max}}_{n}=c^{\rm{max}}\mu$ in order do set \eqref{eq:mp_2} and build the new ellipse region given by Fact \ref{prop:ellipse_ricco}. {\rm(ii)} Considering the sampling period $T_{s}$, obtain \hbox{$\theta^{\rm{max}}=\hat{w}_{d}T_{s}$} and set \eqref{eq:T_1} in order to build the conic region defined in Fact \ref{fact:conic}. {\rm(iii)} Finally, set \eqref{eq:c_s}-\eqref{eq:rmax_s} and form the circle region defined in Fact \ref{fact:circle}.

\textbf{$5^{\rm{th}}$ Step} Based on prior information, evaluate the use of the dynamical regions related to $\hat{\zeta}$, $\hat{w}_{d}$ and $\widehat{\zeta w_n}$ and the tuning variables  $\Delta _{\zeta}$, $\Delta_{w_{d}} $ and $\Delta_{\zeta w_{n}}$ in order to estimate more conservative regions \eqref{eq:zeta_max}-\eqref{eq:zeta_wn_max}. Combine the LMI regions by means of Lemma \ref{lem:intersection} and form the constraint \eqref{eq:D_constraint}.

\textbf{$6^{\rm{th}}$ Step} Solve the problem of subspace identification with eigenvalue constraints \eqref{eq:eigenvalue_c_problem}-\eqref{eq:P_matrix} and obtain $\hat{A}$.

\textbf{$7^{\rm{th}}$ Step} Validate the constrained estimated model and evaluate the necessity to return to the $5^{th}$ Step.
\end{algori}
 \begin{figure}[H]
\begin{center}
\includegraphics[scale=0.5]{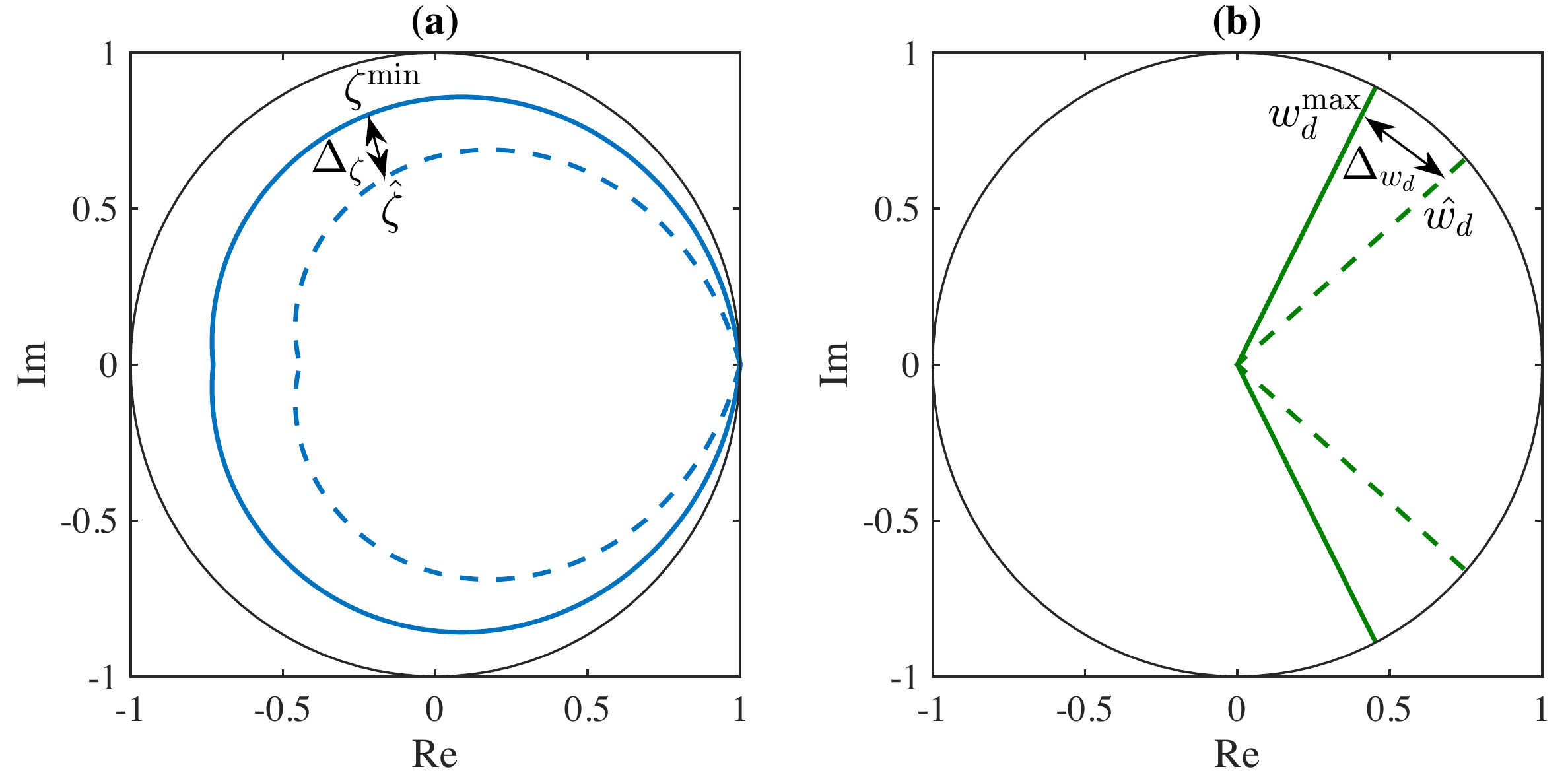} 
\end{center}
\caption{Examples of the effect of the max-min variables defined in \eqref{eq:zeta_max}-\eqref{eq:wd_max} on the $z$-plane.  (a) Overshoot: in dashed blue the area of the respective estimated variable $\hat{\zeta}$ and in continuous blue the new area $\zeta^{\rm{min}}$ defined with the tuning variable $\Delta _{\zeta}$.  (b) Period between damped oscillations: in dashed green the area of the respective estimated variable $\hat{w}_{d}$ and in continuous green the new area $w_{d}^{\rm{max}}$ defined with tuning variable $\Delta _{w_{d}}$. The main objective to create a border larger than the estimated one is constraint the poles in a more conservative region. The more conservative region related to the settling-time \eqref{eq:zeta_wn_max} can be analyzed similarly to the more conservative region defined based on the overshoot from \eqref{eq:zeta_max}.}
\label{fig:bound_regions}                                 
\end{figure}  
\section{Simulated results}\label{results}
Our first example indicates that using the Fact \ref{prop:ellipse_ricco} is more suitable to take into account information on how oscillatory a system is than the LMI inner approximations presented in \cite{Rosinova2014}. Indeed, \cite{Teixeira2011} argues that prior information may damage the model quality if its uncertainty is not properly accounted for. Likewise, our second example corroborates this result, indicating that one should be more conservative on the definition of the LMI eigenvalue regions.
Finally, we illustrate that even though the LMI eigenvalue regions here revisited or presented are approximations for higher-order systems, they may be useful to improve their corresponding models.

In the following examples, the unconstrained estimates are obtained by means of the PI-MOESP method proposed by \cite{verhaegen2007} with past and future horizon lengths set equal to 10. YALMIP \cite{lofberg2004} was used to solve the convex constrained optimization problems with MOSEK \cite{mosek} as the selected solver, both packages running in 
MATLAB.
\subsection{Building LMI regions for subspace identification}\label{subsec:LMI_r}
Consider the second-order continuous-time linear system \eqref{eq:segunda_ordem} with $K=0.7$, $\zeta=0.2$ and $w_{n}=1$. The eigenvalues of the corresponding state matrix $A$ from \eqref{state_eq} are $s_{1,2 }=-0.2 \pm  0.9798\jmath$. The output is measured with the sampling period $T_{s}=0.3$ s and contaminated with colored noise $\nu$ generated by white noise $v$ with standard deviation $\sigma_{v}=1$ filtered by 

\begin{eqnarray}
\nu(s)=\frac{10s^{2}+5}{s^3+10s^{2}+s+2}v(s).
\end{eqnarray}

In order to identify the system, we generated a PRBS signal with $16$ bits and with values held during $100$ samples. The simulation is taken during $40$ seconds (not shown for brevity).   
We investigate a $100$-run Monte Carlo simulation with different noise realizations for $\sigma_{v}=1$. 
The estimated eigenvalues of the unconstrained estimation are shown in blue in Fig. \ref{fig:poles_ex1}. Note that the unconstrained estimator (in blue) fails by yielding unstable models at times. 
\begin{figure}[H]
\begin{center}
\includegraphics[scale=0.675]{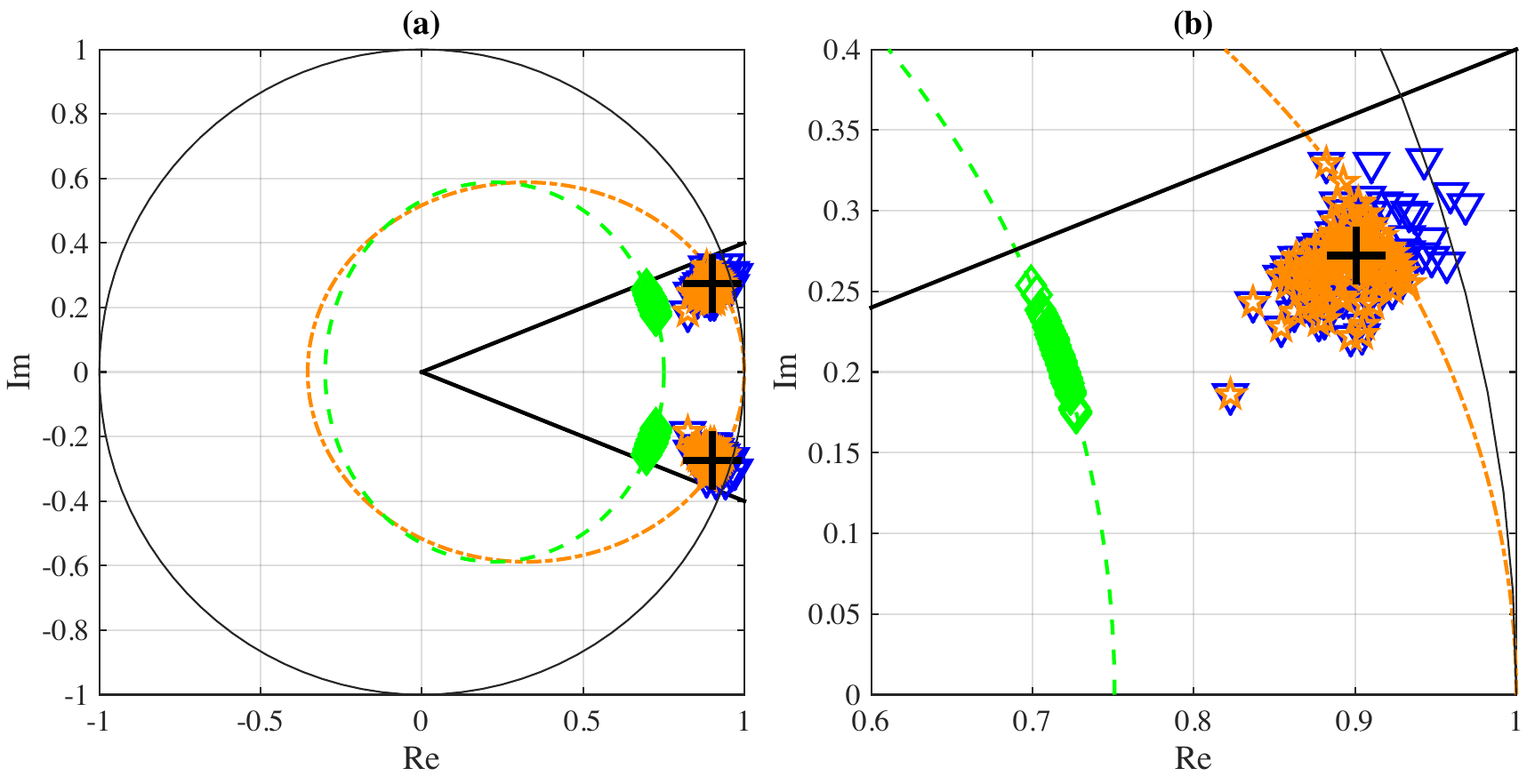} 
\end{center}
\caption{(a) Dynamical regions and estimated eigenvalues on the $z$-plane and (b) a zoom. The unconstrained estimates from PI-MOESP are shown by downward-pointing triangle ($\triangledown$) markers in blue. The constrained estimates using the Fact \ref{fact:ellipse_rosinova}  (dashed green ellipse) and Fact \ref{fact:conic} (full black conic lines) are shown by diamond markers ($\diamond$) in green. The constrained estimates using the Facts \ref{fact:conic} (full black conic lines) and \ref{prop:ellipse_ricco} (dash-dot line orange ellipse) are indicated by pentagram markers ($\star$) in orange. The true eigenvalues are given by plus markers (+) in black.
}
\label{fig:poles_ex1}                                 
\end{figure}
\begin{figure}[H]
\begin{center}
\includegraphics[scale=0.6]{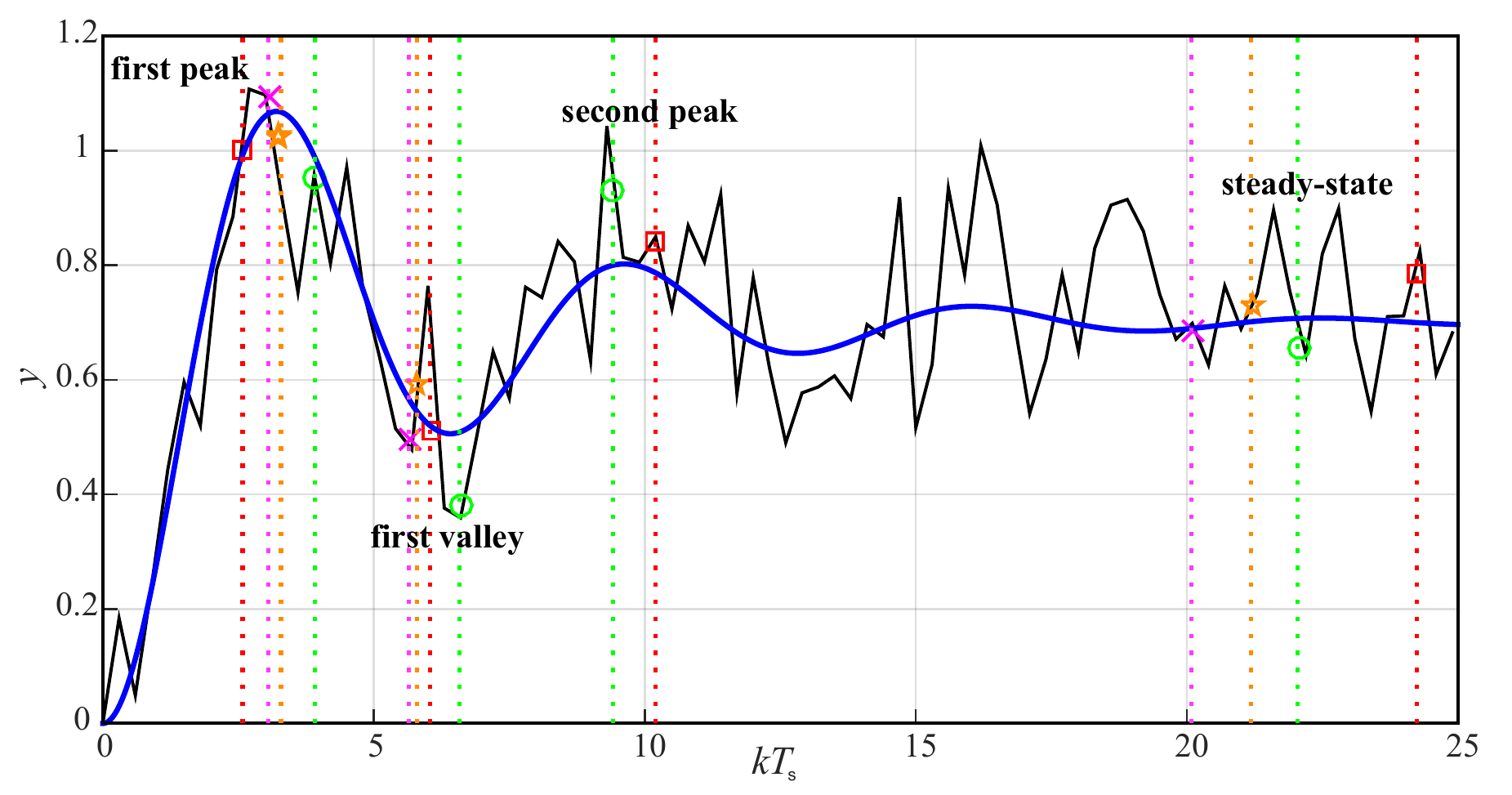} 
\end{center}
\caption{Output sequence $y_{k}$ ($\rm{SNR} \approx 5 \rm{dB}$) yielded by the step response simulation of the sampled model \eqref{eq:segunda_ordem} ($K=0.7$, $\zeta=0.2$ and $w_{n}=1$). For comparison, the continuous-time noise-free response is shown in blue. Markers selected in order to estimate the overshoot and the period between damped oscillations: the orange markers and magenta markers show the first peak, first valley and steady-state; the green and the red markers show the first peak, first valley, second peak and steady-state values. Each set of points, grouped by colors, are used to build a LMI region.
}
\label{fig:ex_2_a_step_e}                                 
\end{figure}
Next, we extract \textit{prior} information about the overshoot (related to $\zeta$) and the period between damped oscillations ($w_{d}$) from the step response tests and use them as constraints. Observe that, in this first example, for simplicity we set $\Delta_{\zeta}=0$, $\Delta_{\zeta w_n}=0$ and $\zeta w_n = 0$. In doing so, we follow the Procedure \ref{proc:6_1}.

Fig. \ref{fig:ex_2_a_step_e} shows both the ideal and noisy sampled step response for $\rm{SNR} \approx 5 \rm{dB}$. 
Observe that the points that are critical to the estimation of the overshoot and the period between damped oscillations are indicated by orange star markers. 
We use the half period $T_{d}/2$ (the first peak and the first valley) to estimate the period between damped oscillations. Also, we use the first peak and the steady-state to estimate the overshoot. Based on these points, we obtain the parameters $\hat{\zeta} \approx 0.36$ and $\hat{w}_{d}\approx 1.27$ rad/s. We use such \textit{prior} information for the direct parameterization of the LMI dynamical regions given by \eqref{eq:mp_2} in Fact \ref{fact:ellipse_rosinova} and in Fact \ref{prop:ellipse_ricco}, and \eqref{eq:T_1} in Fact \ref{fact:conic}.
The constrained estimates are obtained using the LMI regions defined by the Facts \ref{fact:ellipse_rosinova}, \ref{fact:conic} and  \ref{prop:ellipse_ricco}. 

These LMI regions and the corresponding estimated eigenvalues of the unconstrained and constrained models are shown in Fig. \ref{fig:poles_ex1}. 
Observe that, unlike our proposed LMI region (Fact \ref{prop:ellipse_ricco}), the inner region (in green) given by the Fact \ref{fact:ellipse_rosinova} \cite{Rosinova2014} does not encompass an important eigenvalue region nearby the unit circle. Fig. \ref{fig:bode_moesp_pi_test} shows the frequency response of the estimated models. The results suggest that the proposed Fact \ref{prop:ellipse_ricco} is useful in subspace identification problems.
\begin{figure}[H]
\begin{center}
\includegraphics[scale=0.65]{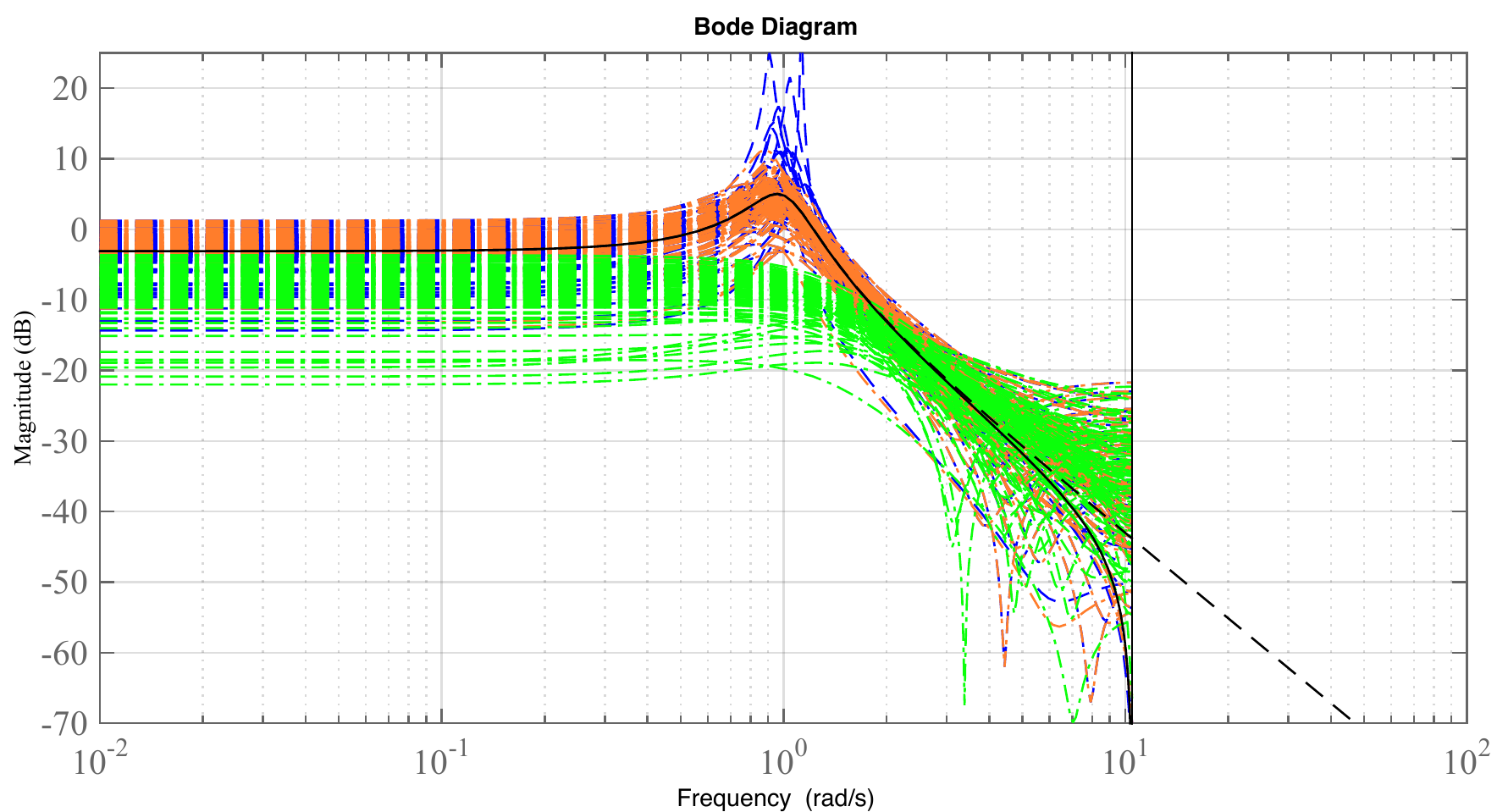} 
\end{center}
\caption{Comparison of frequency response estimates: true response (black line); using unconstrained PI-MOESP method (dashed blue); using Fact \ref{fact:ellipse_rosinova} (dash-dot green)  and Fact \ref{fact:conic}; and using the Facts \ref{fact:conic} and \ref{prop:ellipse_ricco} (dash-dot orange.)}
\label{fig:bode_moesp_pi_test}                                 
\end{figure}
\subsection{The effect of setting inaccurate LMI regions}
Consider again the system simulated in Section \ref{subsec:LMI_r}. 
Now we aim at investigating the effect of setting an inaccurate LMI region. 
For example, to compare the influence of the error on the estimation of the overshoot and the period between damped oscillations, we consider three cases as shown in Fig. \ref{fig:ex_2_a_step_e} by red, green and magenta markers. Again note that we set $\Delta_{\zeta}=0$, $\Delta_{\zeta w_n}=0$ and $\zeta w_n = 0$ and that we follow the Procedure \ref{proc:6_1} in this example.

 For the first and second cases (green and the red markers), we use the first valley and second peak for the estimation of the parameters $\hat{w}_{d,1}\approx  1.11$  rad/s and $\hat{w}_{d,2}\approx 0.77$  rad/s. In the third case (magenta markers) we consider the first peak and first valley for the estimation of the parameter $\hat{w}_{d,3} \approx 1.21$ rad/s.  For both cases, we consider the first peak and the steady-state value on the estimation of the parameters $\hat{\zeta}_{1}\approx 0.33$ (green),  $\hat{\zeta}_{2}\approx 0.43$ (red) and $\hat{\zeta}_{3} \approx 0.24$ (magenta). Such parameters can be used as \textit{prior} information to build the LMI dynamical regions given by \eqref{eq:mp_2} (Fact \ref{prop:ellipse_ricco}) and \eqref{eq:T_1} (Fact \ref{fact:conic}). Fig. \ref{fig:figure_2_ex_2} shows the corresponding regions in the same color of the aforementioned markers.
 \begin{figure}[H]
\begin{center}
\includegraphics[scale=0.685]{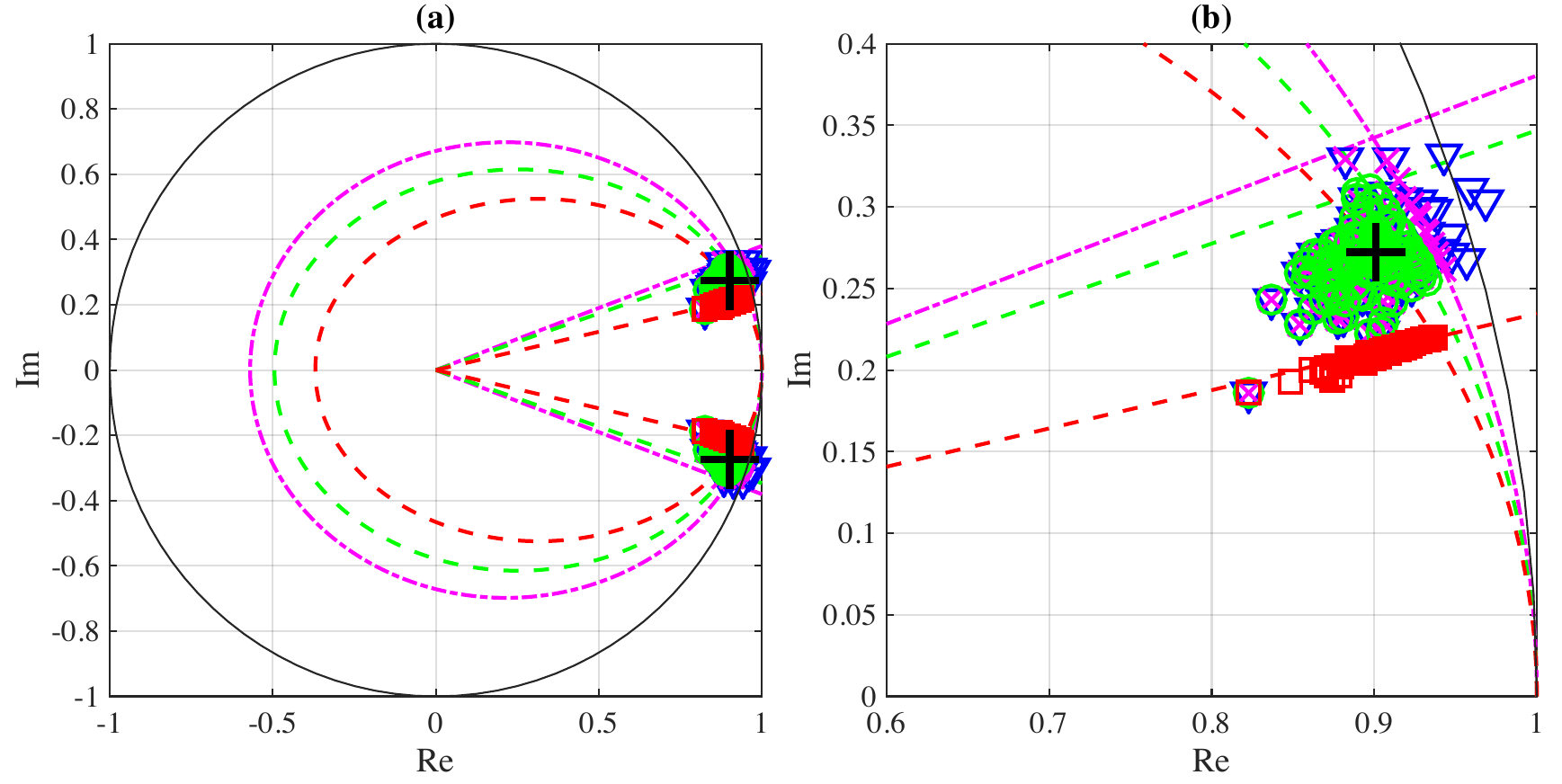} 
\end{center}
\caption{(a) Dynamical regions and estimated eigenvalues on the $z$-plane and (b) a zoom. The unconstrained estimates  are shown by downward-pointing triangle ($\triangledown$) markers in blue.  The constrained estimates using the intersection of the Fact \ref{fact:conic} (cones) and the Fact \ref{prop:ellipse_ricco} (ellipses) are shown for three cases by means of square markers ($\square$) in red, by circle markers in green and cross markers magenta. The true eigenvalues are given by plus markers (+) in black. }
\label{fig:figure_2_ex_2}                                 
\end{figure}

We perform Monte Carlo simulation comprising $100$ noise realizations. The results shown in Fig. \ref{fig:figure_2_ex_2}  indicates that using the constraints \eqref{eq:mp_2} (Fact \ref{prop:ellipse_ricco}) and \eqref{eq:T_1} (Fact \ref{fact:conic}) yields improved results for the first (green) and third (magenta) cases compared to the unconstrained case (blue). In fact, there are no significant differences between the latter cases nearby the true eigenvalues of the system; see Fig. \ref{fig:figure_2_ex_2}b. The second case (red) shows the effect of 
setting a smaller LMI region related to the period between damped oscillations. 
Therefore, the use of tuning variables proposed in \eqref{eq:zeta_max}-\eqref{eq:zeta_wn_max} is a more conservative choice, and thus more appropriate, for the tuning of the LMI regions as we shall see next.

\subsection{Setting conservative LMI-regions for higher-order dynamics}\label{ex:conservative_regions}
Consider now the system given by
\begin{eqnarray}\label{eq:fourth_order_model}
G(s)=\frac{100 s + 1500}{s^4 + 11 s^3 + 130 s^2 + 1020 s + 2000},
\end{eqnarray}
whose poles are $s_{1,2}=-0.50 \pm 9.99\jmath$, \hbox{$s_{3}=-7.24$} and $s_{4}= -2.76$.
The output is  measured with  $T_{s}=0.05 $ s, corrupted by the colored noise $\nu$, which is generated by filtering the white noise $v$ with standard deviation $\sigma_{v}=1$  as 
\begin{eqnarray}
\nu(s)=\frac{10 s^2 + 2}{s^3 + 10 s^2 + s + 20}v(s).  
  \end{eqnarray}

The PRBS input signal is generated with $20$ bits and hold for $100$ samples. The simulation is performed along $20$ seconds yielding $40000$ samples (not shown). From one of the steps of the PI-MOESP, we set the order of the model as four. We perform a $100$-run Monte Carlo simulation. Fig. \ref{fig:figure_ex3} shows the unconstrained estimates in blue. Note that the unconstrained (blue) estimator may fail on the eigenvalue localization. 

\begin{figure}[H]
\begin{center}
\includegraphics[scale=0.42]{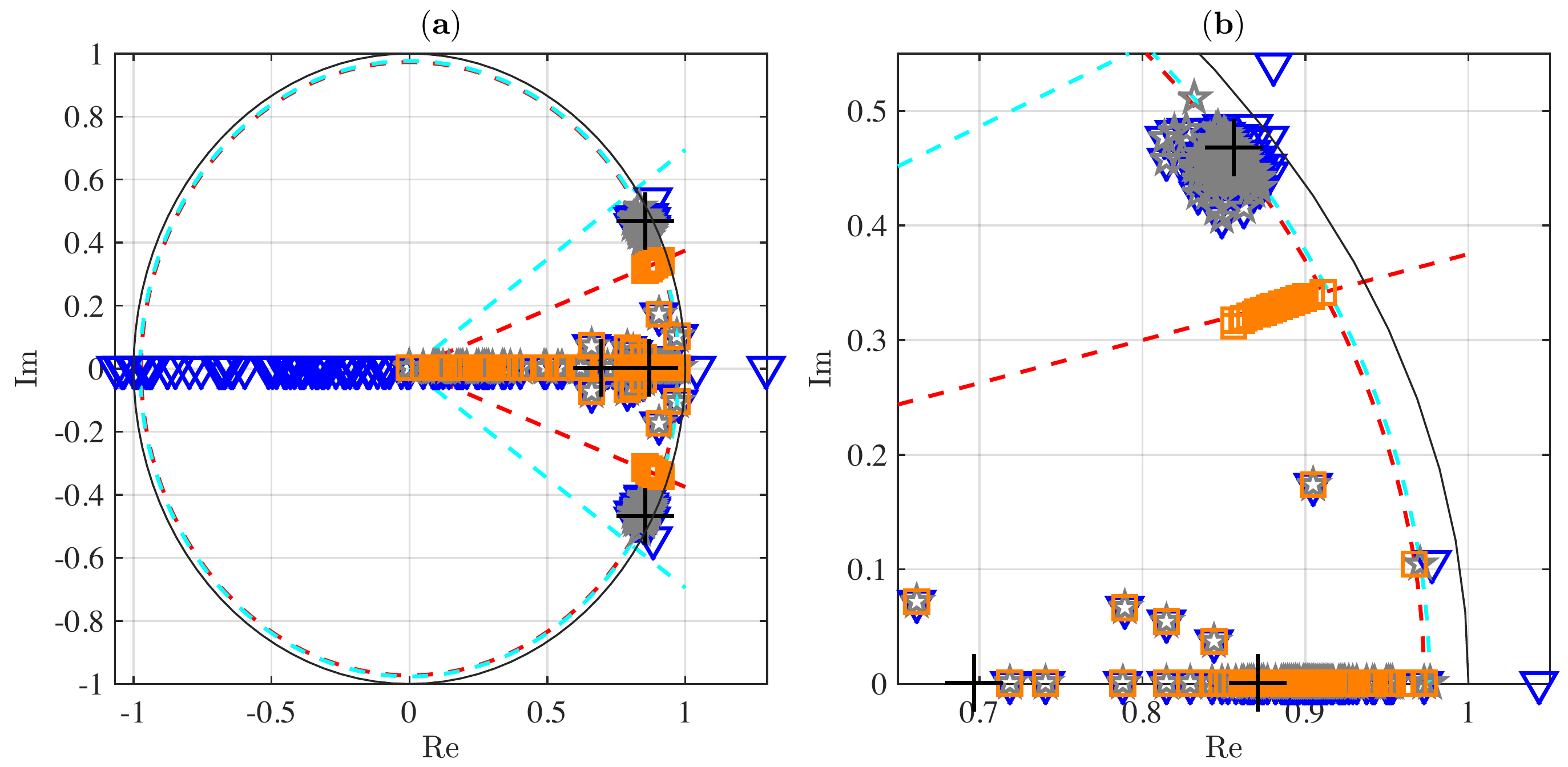} 
\end{center}
\caption{(a) Dynamical regions and estimated eigenvalues on the $z$-plane and (b) a zoom. The unconstrained estimates are shown by triangle ($\triangledown$) markers in blue. The constrained estimates using both Fact \ref{fact:conic} (dashed red conic lines) and  Fact \ref{fact:circle} for \eqref{eq:c_s}-\eqref{eq:rmax_s} (dashed red circle) are shown by square markers ($\square$) in orange. The constrained estimates using the tuning variables \eqref{eq:zeta_max}-\eqref{eq:zeta_wn_max} are shown by dashed cyan conic lines, dashed cyan circle and by means of pentagram markers ($\star$) in gray. The true eigenvalues are shown by plus markers (+) in black.}
\label{fig:figure_ex3}                                 
\end{figure}

Thus, we use the auxiliary information obtained from a step-response experiment. Such experiment is used to estimate the period between damped oscillations and the settling-time. Note that the step response  (Fig. \ref{fig:ex_3_a_step_2}) of this system is more complex than the pattern for second-order systems. Considering that we are dealing with a higher order dynamics, from Fig. \ref{fig:ex_3_a_step_2}, it seems that there is a superposition of an underdamped response and an overdamped  response. Even so, the prior information of the settling-time and the period between oscillations can be approximated. Since we observe a small overshoot (for instance, see the region of $\zeta^{\rm{min}} =0.9$ in Fig. \ref{fig:mp_circles_ellipses}), we prefer not to use this information. Also it is important to note that the dominant dynamics of the step response test is not highly affected by the zero. Despite the fact that the effect of the zero in the dominant dynamics of the step response is not evident, we highlight that it may be possible in some cases. However, we try to approximate the regions related to the estimated auxiliary information that can be useful on the identification process as auxiliary information. If the approximated region does not contribute to the improvement of the perfomance of the constrained estimated model, then the auxiliary information can be discarded or tuned. In such cases, the validation step can contribute on this decision. 
The Procedure \ref{proc:6_1} summarizes all the steps followed to estimate the auxiliary information used to generate the dynamical LMI regions shown in Fig. \ref{fig:figure_ex3}. The step response of \eqref{eq:fourth_order_model} is shown in Fig. \ref{fig:ex_3_a_step_2} for both ideal and noisy cases ($\rm{SNR} \approx 5 \rm{dB}$). We highlight the important points for the estimation of the period between damped oscillations and settling-time by red markers. Here, we use the second peak and the second valley (half period $T_{d}/2$) for the estimation of the parameter $w^{1}_{d}$. First, we estimate the parameters $\hat{w}^{1}_{d}\approx 7.17$  rad/s and $\hat{t}^{1}_{s,\rm{1}\%}\approx  8.33$ s (which is equivalent to $\widehat{\zeta w^{1}_{n}} \approx 0.55$ rad/s) and build an LMI region using the Facts \ref{fact:conic} and \ref{fact:circle} according to \eqref{eq:c_s}-\eqref{eq:rmax_s}; see the dashed red region in Fig. \ref{fig:figure_ex3}a.

\begin{figure}[h]
\begin{center}
\includegraphics[scale=0.43]{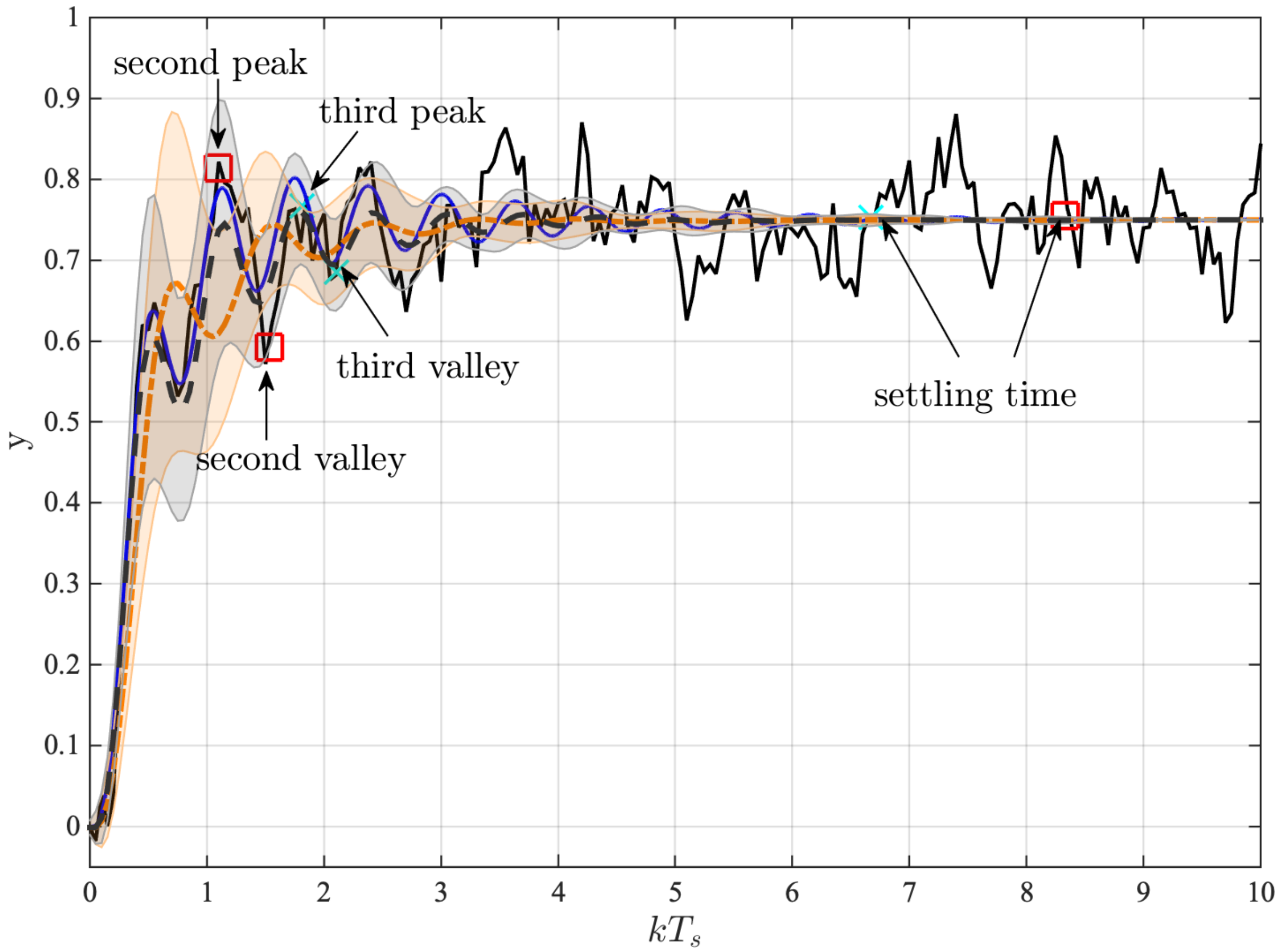} 
\end{center}
\caption{Output sequence $y_{k}$ ($\rm{SNR} \approx 5 \rm{dB}$) yielded by the step response simulation of the true model \eqref{eq:fourth_order_model} is shown in continuous black line. For comparison the continuous-time noise-free response is shown in continuous blue line. The red markers show the second peak, second valley and the settling time values selected to estimate $w^{1}_{d}$ and $t^{1}_{s,\rm{1}\%}$. The cyan markers show the third peak, third valley and the settling time values selected to estimate $w^{2}_{d}$ and $t^{2}_{s,\rm{1}\%}$. The average of 100-run normalized step responses obtained from models constrained in the red region of Fig. \ref{fig:figure_ex3} is shown in orange dashed-dot line and from models constrained in the cyan region of Fig. \ref{fig:figure_ex3} is shown in gray dashed line. The respective standard deviations are shown by the shaded regions in orange and gray.}
\label{fig:ex_3_a_step_2}                                 
\end{figure}

Next, in order to estimate a new conservative constrained region for the eigenvalues, in Fig. \ref{fig:ex_3_a_step_2} we choose the third peak, third valley and the settling time highlighted in cyan. Then, we estimate $\hat{w}^{2}_{d}\approx 10.49$ and $\hat{t}^{2}_{s,\rm{1}\%}\approx 6.67$ s (equivalent to $\widehat{\zeta w^{2}_{n}} \approx 0.69$ rad/s). These points are used to set a new region, for example, we define the following parametrization: ($\rm{i}$) $\hat{w}_{d}\triangleq(\hat{w}^{1}_{d}+\hat{w}^{2}_{d})/2$ and $\Delta_{w_{d}}\triangleq |\hat{w}^{1}_{d}-\hat{w}^{2}_{d}|$ in \eqref{eq:wd_max} and; ($\rm{ii}$) $\widehat{\zeta w_{n}} \triangleq(\widehat{\zeta w^{1}_{n}} +\widehat{\zeta w^{2}_{n}})/2$ and $\Delta_{\zeta w_{n}} \triangleq |\widehat{\zeta w^{1}_{n}} -\widehat{\zeta w^{2}_{n}} |$ in \eqref{eq:zeta_wn_max}.  These relations yield $\hat{w}^{\rm{max}}_{d}\approx 12.14$ rad/s and $\widehat{\zeta w_{n}}^{\rm{min}}\approx 0.48$ rad/s. Thus, the new constrained region is built by means of \eqref{eq:wd_max} and \eqref{eq:zeta_wn_max} and the Facts \ref{fact:conic} and \ref{fact:circle}  according to \eqref{eq:c_s}-\eqref{eq:rmax_s}; see the dashed cyan region in Fig. \ref{fig:figure_ex3}a. 

 
As indicated by Fig. \ref{fig:figure_ex3}, the use of \textit{prior} information is useful for subspace identification. Comparing the unconstrained PI-MOESP estimation (blue markers) and the constrained estimates (orange and cyan), we verify that the estimated complex and real eigenvalues are close to the true eigenvalues of \eqref{eq:fourth_order_model}. Also, we verify that the use of more conservative regions should be an effective way to circumvent some issues related to the inaccurate estimation of the approximated dynamical regions even for higher-order systems. The average of 100-run normalized step responses for both constrained estimations  with the respective standard deviation ($\sigma$) are shown in Fig. \ref{fig:ex_3_a_step_2}. As we expect, based on the estimated eigenvalues obtained in the regions (orange and cyan), qualitatively the relaxed estimations given in dashed gray has a better perfomance compared to the other in dashed-dot orange.

\section{Conclusions} \label{Conclusions}
This paper addresses the problems of ($\rm{i}$) mapping auxiliary information obtained from step-response experiments or historical input-output data onto useful LMI conservative regions for discrete-time state-space models and ($\rm{ii}$) using this information in subspace identification methods with eigenvalue constraints. 

We discuss the meaning of the auxiliary information about overshoot, the period between oscillations and the settling time in both $s$ and $z$ complex regions. In this regard, we argue that it is  simpler to extract prior information from the step-response experiments than by means of first principles for instance. In fact, even though the mapping of these properties is obtained for second order systems,  we can also use these approximated regions to constraint the eigenvalues of  more complex systems. If the auxiliary information is properly addressed, we recommend the use of the constrained method since  it guarantees at least the stability of the model. The numerical examples here discussed corroborates the aforementioned insights. 

A drawback in the use of LMIs on the constrained discrete-time subspace identification is the impossibility of constraining each eigenvalue separately. This issue is a topic of our future research efforts.

\section{Acknowledgments}
This research was supported by the Brazilian agencies:  National Council for Scientific and Technological Development (CNPq) and  Coordination for the Improvement of Higher Education Personnel (CAPES).
\bibliographystyle{iet}

\end{document}